\documentclass[12pt]{article}

\usepackage{amsmath,amssymb,amsfonts,amsthm,bbm}
\usepackage{graphicx}
\usepackage{float}






\pdfoutput=1

\usepackage[utf8]{inputenc} 
\usepackage[T1]{fontenc}    
\usepackage{hyperref}       
\usepackage{url}            
\usepackage{booktabs}       
\usepackage{amsfonts}       
\usepackage{nicefrac}       
\usepackage{microtype}      

\usepackage{natbib}


\textheight 8.5 in
\textwidth 6.5 in
\topmargin -0.5 in
\oddsidemargin -0.1 in

\makeatletter
\def\singlespace{\def\baselinestretch{1}\@normalsize}

\makeatletter
\def\singlespace{\def\baselinestretch{1}\@normalsize}

\title{Taming heavy-tailed features by shrinkage}

\newcommand{\bfm}[1]{\ensuremath{\mathbf{#1}}}
     \def\bA{\bfm A}     \def\cA{{\cal  A}}     
          \def\cB{{\cal  B}}     
          \def\cC{{\cal  C}}

     \def\bH{\bfm H}          
          \def\cI{{\cal  I}}

          \def\cL{{\cal  L}}

          \def\cS{{\cal  S}}

\def\bv{\bfm v}               
               
\def\bx{\bfm x}     \def\bX{\bfm X}          
\def\by{\bfm y}               
     \def\bZ{\bfm Z}          
\def\bzero{\bfm 0}

\newcommand{\bfsym}[1]{\ensuremath{\boldsymbol{#1}}}
       \def \bbeta    {\bfsym{\beta}}

       \def \bDelta   {\bfsym{\Delta}}

\def \bSigma   {\bfsym{\Sigma}}

\renewcommand{\hat}{\widehat}

\def \heps     {\hat{\heps}}



\DeclareMathOperator*{\argmin}{argmin}

\DeclareMathOperator{\E}{E}

\DeclareMathOperator{\var}{var}

\def \var   {\mbox{var}}




 at 8truept

\def\today{\ifcase\month\or
  January\or February\or March\or April\or May\or June\or
  July\or August\or September\or October\or November\or December\fi
  \space\number\day, \number\year}

\def \newpage {\vfill\eject}

\newdimen\biblioindent\biblioindent=30pt
\newcommand{\beq}  {\begin{equation}}
\newcommand{\eeq}  {\end{equation}}
\newcommand{\beqn} {\begin{eqnarray}}
\newcommand{\eeqn} {\end{eqnarray}}
\newcommand{\beqnn}{\begin{eqnarray*}}
\newcommand{\eeqnn}{\end{eqnarray*}}



\newtheorem{lem}{Lemma}

\newtheorem{cor}{Corollary}

\newtheorem{thm}{Theorem}
\newtheorem{rem}{Remark}
\newcounter{CondCounter}

\newcommand{\lonenorm}[1]{\lVert#1\rVert_1}
\newcommand{\ltwonorm}[1]{\lVert#1\rVert_2}
\newcommand{\lfournorm}[1]{\lVert#1\rVert_4}
\newcommand{\opnorm}[1]{\lVert#1\rVert_{\mathrm{op}}}

\newcommand{\supnorm}[1]{ \lVert#1  \rVert_{\max}}
\newcommand{\inn}[1]{\langle #1 \rangle}

\def \RR	{\mathbb{R}}

\def \PP      {\mathbb{P}}
\def \EE      {\mathbb{E}}

\def \ind {1}

%

\author{%
  Ziwei Zhu, Wenjing Zhou\\
  \normalsize
  Department of Statistics, University of Michigan, Ann Arbor\\
}

\begin{document}

\maketitle

\begin{abstract}

In this work, we focus on a variant of the generalized linear model (GLM) called corrupted GLM (CGLM) with heavy-tailed features and responses. To robustify the statistical inference on this model, we propose to apply $\ell_4$-norm shrinkage to the feature vectors in the low-dimensional regime and apply elementwise shrinkage to them in the high-dimensional regime. Under bounded fourth moment assumptions, we show that the maximum likelihood estimator (MLE) based on the shrunk data enjoys nearly the minimax optimal rate with an exponential deviation bound. Our simulations demonstrate that the proposed feature shrinkage significantly enhances the statistical performance in linear regression and logistic regression on heavy-tailed data. Finally, we apply our shrinkage principle to guard against mislabeling and image noise in the human-written digit recognition problem. We add an $\ell_4$-norm shrinkage layer to the original neural net and reduce the testing misclassification rate by more than $30\%$ relatively in the presence of mislabeling and image noise. 

\end{abstract}

\section{Introduction}

Heavy-tailed data abound in modern data analytics. For instance, financial log-returns and macroeconomic variables usually exhibit heavy tails (\citet{Con01}). In a genomic study, microarray data are always wildly fluctuated (\citet{LDG03}, \citet{PHo05}). In deep learning, features learned by deep neural nets are generated via highly nonlinear transformation of the original data and thus have no guarantee of exponential-tailed distribution. These real-world cases contradict the common sub-Gaussian or sub-exponential conditions in the statistics literature.  A series of questions thus arise: with heavy-tailed data, can we still achieve good statistical properties of the previous standard estimators or testing statistics? If not, is there a solution to overcome heavy-tailed corruption and achieve equally well statistical performance as with exponential-tailed data?
		
To answer these questions, perhaps the easiest statistical problem to start with is the mean estimation problem. It turns out surprisingly, as first pointed out by \citet{Cat12}, that the empirical mean is far from optimal when data have only a few finite moments. \citet{Cat12} proposed a novel M-estimator for the population mean and revealed its sub-Gaussian behavior around the true mean under merely bounded second moment assumptions. The correspondent score function is constructed to be logarithmic with respect to the deviation when it is large, thereby being insensitive to outliers and yielding a robust M-estimator. Later on \citet{Min15}, \citet{DLL16} and \citet{HSa16} established a similar sub-Gaussian concentration property for the median-of-means estimator (\citet{NYD82}). Particularly, \citet{Min15} and \citet{HSa16} consider the median-of-means approach under general metric spaces. 
		
\begin{figure*}[t]
	\centering
	\begin{tabular}{ccc}
 \hspace{-.8cm}\includegraphics[scale=.30]{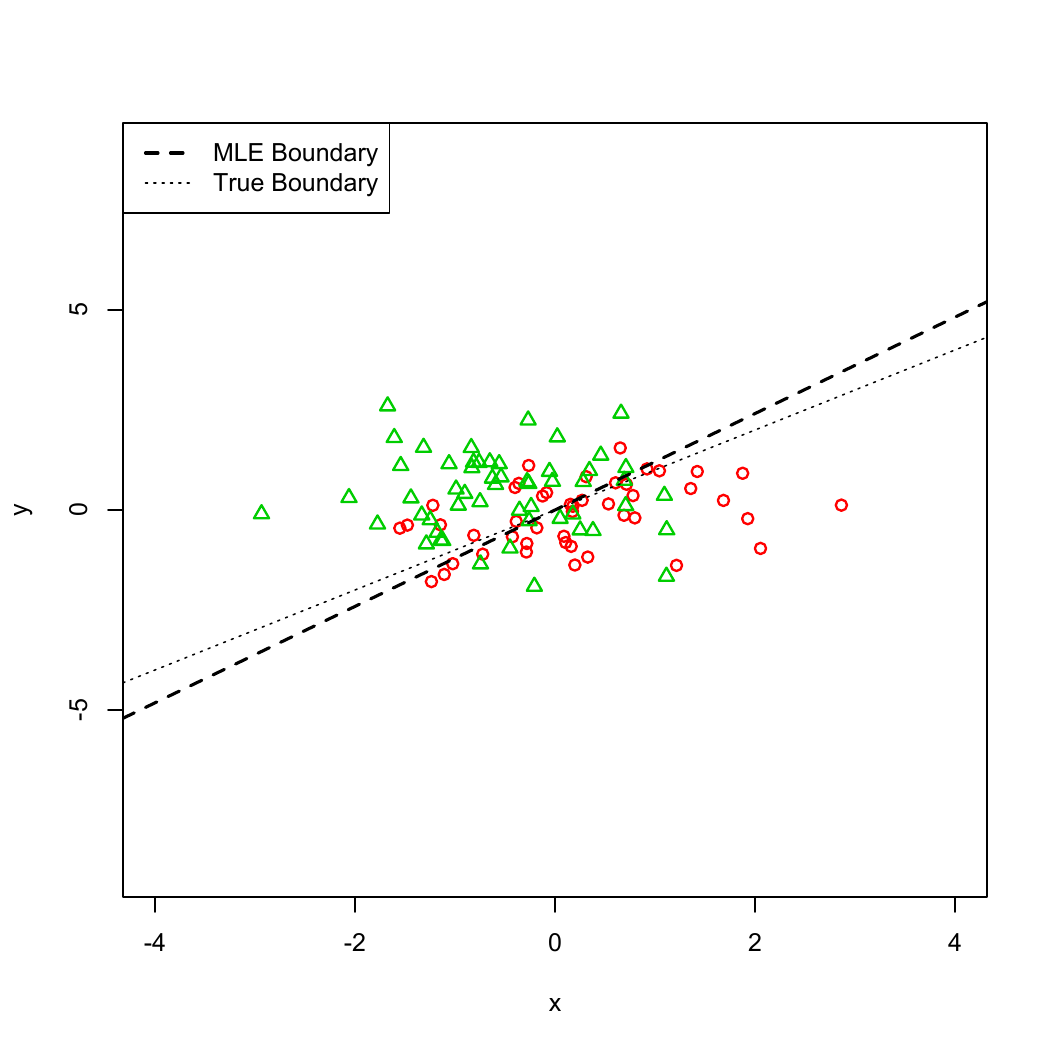} & 
 \hspace{-1.1cm} \includegraphics[scale=.30]{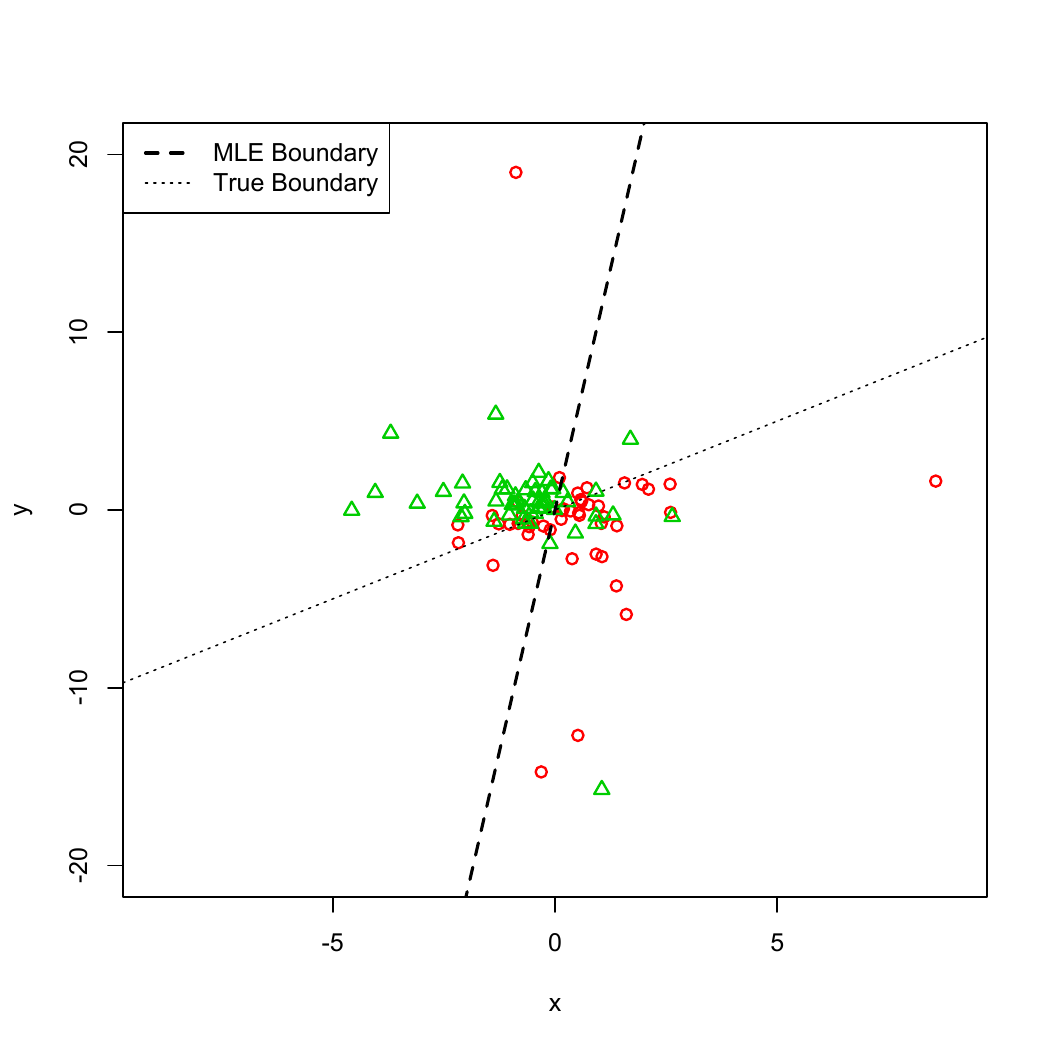} & 
 \hspace{-1.1cm} \includegraphics[scale=.30]{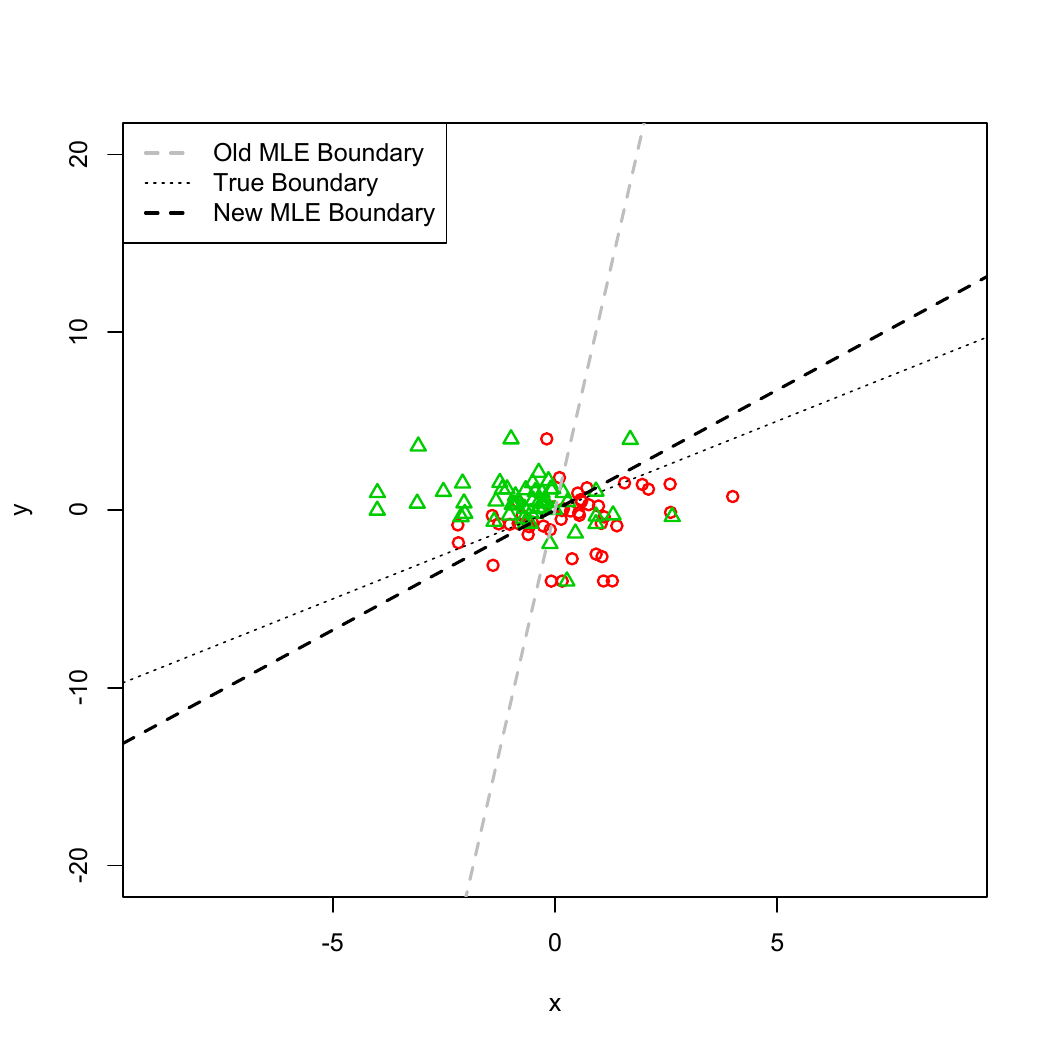} \\
 \hspace{-.5cm} (a) Gaussian features & 
 \hspace{-.7cm} (b) Student's $t_2$ features  & 
 \hspace{-.8cm} (c) Shrunk Student's $t_2$ features
	\end{tabular}
	\caption{Logistic Regression with $10\%$ mislabeled data based on different features}
	\label{fig:1}
	\vspace{-.5cm}
\end{figure*}
		
Beyond the mean estimation problem, robust risk minimization and the median-of-means approach are proved to be successful under a great variety of problem setups with heavy-tailed data, e.g., covariance matrix or general matrix estimation (\citet{Min18, FWZ17}), empirical risk minimization (\citet{BJL15, HSa16}), low-dimensional regression and high-dimensional sparse linear regression (\citet{FLW16, PSZ19, SZF17, WZZ18}), low-rank matrix recovery (\citet{FWZ17}) and so forth. 
		
Despite heated research on statistics with heavy-tailed data, very few have studied the effect of heavy tails of features or designs in regression. It remains unclear whether widely spread features or designs are blessings or curses to statistical efficiency. This motivates us to consider a variant of the generalized linear model (GLM) called corrupted GLM (CGLM) that accommodates both heavy-tailed designs and responses. The CGLM allows extra random corruption on the response of the traditional GLM, thereby enjoying much broader model capacity and embraces a myriad of important real-world problems. 
		
One key message of our paper is that heavy-tailed features can aggravate the corruption on the response and jeopardize standard statistical approaches. To further illustrate this point, Panels (a) and (b) of Figure \ref{fig:1} contrast the performance of the standard MLE on light-tailed features and heavy-tailed features under a logistic regression model. When the data points are widely spread as in Panel (b), the boundary derived from the MLE deviates far from the true boundary. When the data points are Gaussian, however, Panel (a) shows nearly perfect alignment between the MLE boundary and the true boundary. The reason for this difference is that the outliers, especially those mislabeled, have severe influence on the log-likelihood and can easily destroy the validity of the MLE.
		
 To tame the heavy-tails of the features, we propose to shrink the features before calculating the M-estimator. Given feature vectors $\{\bx_i \in \RR^d\}_{i=1}^n$, a threshold value $\tau$ and a norm $\|\cdot\|$ on the feature space, the shrunk features $\{\widetilde\bx^s_i\}_{i=1}^n$ are defined as: 
		 \[
		 	\widetilde \bx^s_i = \min(\|\bx_i\|, \tau) \frac{\bx_i}{\|\bx_i\|}.
		 \]
In short, we restrict $\|\widetilde\bx^s_i\|$ below the level $\tau$. 
In the sequel, we illustrate both theoretically and numerically that the feature shrinkage trades little bias for great variance reduction such that the resulting MLE achieves (nearly) the minimax optimal statistical rate. Panels (b) and (c) of Figure \ref{fig:1} compare the performance of MLE based on original heavy-tailed features and shrunk features. One can see that after feature shrinkage, the new MLE boundary becomes much more aligned with the true boundary than the original one, because the shrinkage mitigates the perturbation of the outliers on the log-likelihood. Note that similar ideas have been explored to overcome adversarial corruption on features. For example, \citet{CCM13} used the trimmed inner product to robustify standard high-dimensional regression methods and established strong statistical guarantees while allowing a certain fraction of observations to be arbitrarily corrupted. \citet{FXM14} proposed to ignore observations with large feature values to prevent adversarial feature corruption in logistic regression and binary classification problems. The major difference between our work and theirs is that our focus is tail behavior, rather than corruption, of features in regression problems. We assume that the features have only few bounded moments, while \citet{CCM13} and \citet{FXM14} assume them to be sub-Gaussian. Our theory does not assume any corruption on the features; all the corruption in this paper is imposed on responses.
		
The rest of the paper is organized as follows. In Section 2, we elucidate the CGLM and the log-likelihood based on the shrunk data. In Section 3, we introduce specific feature shrinkage methods for different regimes and present our main theoretical results. Under the low-dimensional regime, we prove that the MLE based on $\ell_4$-norm shrunk features enjoys the same optimal statistical rate as the standard MLE with sub-Gaussian features up to a $(\log n) ^ {1 / 2}$ factor. For high-dimensional models, we show that the $\ell_1$-regularized MLE based on elementwise shrunk features achieve exactly the minimax optimal rate. One technical contribution worth emphasis is that we provide a rigorous justification of the (restricted) strong convexity of the negative likelihood based on shrunk features. In Section 4, we demonstrate the numerical superiority of our proposed estimators over the standard MLEs under both low-dimensional and high-dimensional regimes. We investigate two important problem setups: linear regression with heavy-tailed noise and binary logistic regression with mislabeled data. Finally, motivated by the shrinakge principle, we add an $\ell_4$-norm shrinkage layer to a convolutional neural network to classify human-written digits in the MNIST dataset. We show the significant improvement of the new architecture in the presence of mislabeling and image noise. 

\section{Problem setup}

In this section, we formulate the corrupted GLM as aforementioned. Recall the definition of the standard GLM with the canonical link. Suppose we have $n$ observations $\{(y_i, \bx_i)\}_{i=1}^n$, where $y_i$ is the response and $\bx_i$ is the feature vector valued in $\RR ^ d$. Under the GLM with the canonical link, the probability density function of the response $y_i$ is defined as 
	\begin{equation}
		\label{eq:2.1}
		\begin{aligned}
		f_n(\by; \bX, {\bbeta^*})& =\prod\limits_{i=1}^n f(y_i; \eta^*_i) = \prod\limits_{i=1}^n\biggl\{c(y_i)\exp\biggl(\frac{y_i\eta^*_i-b(\eta^*_i)}{\phi}\biggr)\biggr\},
		\end{aligned}
	\end{equation}
	
	where $\by= (y_1, \cdots, y_n)^\top$, $\bX=(\bx_1, \cdots, \bx_n)^\top$, $\bbeta^*\in \RR^d$ is the  regression coefficient vector, $\eta_i^*:=\bx_i^\top\bbeta^*$, $b(\cdot)$ is a known function that is twice differentiable with a positive second derivative and $\phi>0$ is the dispersion parameter. The negative log-likelihood corresponding to (\ref{eq:2.1}) is given, up to an affine transformation, by
	\begin{equation}
		\begin{aligned}
		\label{eq:2.2} 
		\ell_n(\bbeta)=\frac{1}{n}\sum\limits_{i=1}^n -y_i\bx_i^\top\bbeta+b(\bx_i^\top\bbeta) =\frac{1}{n}\sum\limits_{i=1}^n -y_i\eta_i+b(\eta_i)=\frac{1}{n}\sum\limits_{i=1}^n \ell_i(\bbeta),
		\end{aligned}
	\end{equation}
	and the gradient and Hessian of $\ell_n(\bbeta)$ are respectively
	\beq
		\label{eq:2.3}
			\nabla\ell_n(\bbeta)=-\frac{1}{n}\sum\limits_{i=1}^n (y_i - b'(\bx_i^\top\bbeta^*))\bx_i~~~\text{and}~~~\nabla^{2}\ell_n(\bbeta)=\frac{1}{n} \sum\limits_{i=1}^n b''(\bx_i^\top\bbeta^*)\bx_i\bx_i^\top. 
	\eeq
	Note that $b'(\bx_i^\top\bbeta ^ *) = \EE(y_i | \bx_i)$. For ease of notation, we write the empirical hessian $\nabla^2 \ell_n(\bbeta)$ as $\bH_n(\bbeta)$ and $\EE (b''(\bx_i^\top\bbeta)\bx_i\bx_i^\top)$ as $\bH(\bbeta)$. 
	
	Under a CGLM, for the $i$th observation we can only observe its corrupted response 
	\beq	
		\label{eq:cglm}
		z_i = y_i + \epsilon_i
	\eeq
	rather than the original response $y_i$, where $\epsilon_i$ is random noise. We emphasize that introducing  $\epsilon_i$ significantly improves the flexibility of the original GLM, such that now the response is not limited within the exponential family. The CGLM embraces many more real-world problems with complex structures, e.g., the linear regression model with heavy-tailed noise, the logistic regression with mislabeled samples and so forth.
	
	To handle the heavy-tailed features and noise on the response, we propose to shrink the data $\{(z_i, \bx_i)\}_{i=1}^n$ first and use them to construct the log-likelihood \eqref{eq:2.2}. Formally, define
	\beq
		\label{eq:2.4}
		\widetilde \ell_n(\bbeta) := \frac{1}{n}\sum\limits_{i=1}^n -\widetilde z_i\widetilde \bx_i^\top\bbeta+b(\widetilde\bx_i^\top\bbeta). 	\eeq
	 We denote the hessian matrix of $\widetilde\ell_n(\bbeta)$ by $\widetilde\bH_n(\bbeta)$ and its population version $\EE \widetilde\bH_n(\bbeta)$ by $\widetilde\bH(\bbeta)$. In the next section, we elucidate the specific shrinkage methods to construct $\widetilde \bx_i$ and $\widetilde z_i$ in both low-dimensional and high-dimensional regimes and explicitly derive the statistical error rates of the MLE based on $\widetilde\ell_n(\bbeta)$. 
	
\section{Main results}

\subsection{Notation}
	
	Here we collect all the notation that we use in the sequel. We use regular letters for scalars, bold regular letters for vectors and bold capital letters for matrices. Denote the $d$-dimensional Euclidean unit sphere by $\cS^{d-1}$. Denote the Euclidean and $\ell_1$-norm balls with the center $\bbeta^*$ and radius $r$ by $\cB_2(\bbeta^*, r)$ and $\cB_1(\bbeta^*, r)$ respectively. We write the set $\{1, \cdots, d\}$ as $[d]$. For two scalar sequences $\{a_n\}_{n \ge 1}$ and $\{b_n\}_{n \ge 1}$, we say $a_n \asymp b_n$ if there exist two universal constants $C_1$ and $C_2$ such that $C_1 b_n \le a_n  \le C_2 b_n$ for all $n \ge 1$. We use $\ltwonorm{\bv}$, $\lonenorm{\bv}$ and $\lfournorm{\bv}$ to denote the Euclidean norm, $\ell_1$-norm and $\ell_4$-norm of $\bv$ respectively. Particularly, recall that $\lfournorm{\bx_i} := (\sum_{j=1}^d x^4_{ij})^{1/4}$. For a matrix $\bA$, we use $\opnorm{\bA}$ and $\supnorm{\bA}$ to denote the operator norm and elementwise max-norm of $\bA$ respectively and use $\lambda_{\min}(\bA)$ to denote the minimum eigenvalue of $\bA$. For any $\bbeta^* \in \RR^d$ and any differential map $f: \RR^d \rightarrow \RR$, define the first-order Taylor remainder of $f(\bbeta)$ at $\bbeta = \bbeta^*$ to be
	 \[
	 	\delta f (\bbeta ; \bbeta^*) := f(\bbeta) - f(\bbeta^*) - \nabla f(\bbeta^*)^\top (\bbeta- \bbeta^*). 
	 \]
	For a set of random variables $\{X_i\}_{i \in \cI}$, we say that they are i.i.d. if they are independent and identically distributed. We refer to some quantities as \emph{constants} if they are independent of the sample size $n$, the dimension $d$ and the sparsity $s$ of $\bbeta^*$ in the high-dimensional regime. 
	
	\subsection{Low-dimensional regime}
	
	The standard MLE estimator is defined as $\widehat\bbeta := \argmin_{\bbeta \in \RR^d} \ell_n(\bbeta)$, where $\ell_n(\cdot)$ is characterized as in \eqref{eq:2.2}. It is well established that under a standard GLM with bounded features, $\widehat\bbeta$ enjoys $(d/n) ^ {1 / 2}$-consistency to the true parameter $\bbeta^*$ in terms of the Euclidean norm. However, when the feature vectors have only bounded moments, there is no guarantee of $(d/n) ^ {1 / 2}$-consistency any more, let alone further perturbation on the response. To overcome the disruption due to heavy-tailed data, we apply $\ell_4$-norm shrinkage to the feature vectors. Construct
	\beq
		\label{eq:l4_norm_shrinkage}
		\widetilde \bx_i := \frac{\min(\lfournorm{\bx_i}, \tau_1)}{\lfournorm{\bx_i}} \bx_i 
	\eeq
 	and
	\beq
		\label{eq:clip_response}
		\widetilde z_i := \min(|z_i|, \tau_2) z_i/|z_i|, 
	\eeq
	where $\tau_1$ and $\tau_2$ are predetermined thresholds. Clipping on the response is natural; when $|z_i|$ is abnormally large, clipping reduces its magnitude to prevent corruption by $\epsilon_i$. Here we explain more on why we shrink features in terms of the $\ell_4$-norm rather than other norms. The $\ell_4$-norm shrinkage has been proven to be successful in low-dimensional covariance estimation in \citet{FWZ17}. Theorem 6 therein shows that when data have only bounded fourth moments, the $\ell_4$-norm shrinkage sample covariance enjoys an operator-norm rate of order $O_{\PP}\{(d \log d/n) ^ {1 / 2}\}$ in estimating the population covariance matrix. This inspires us to apply similar $\ell_4$-norm shrinkage to heavy-tailed features to ensure that the empirical hessian $\widetilde \bH_n(\bbeta)$ is close to its population version $\bH(\bbeta)$ and thus well-behaved. After data shrinkage and clipping, we minimize the negative log-likelihood based on the new data $\{\widetilde z_i, \widetilde \bx_i\}_{i=1}^n$ to derive the M-estimator, i.e., we choose $\widetilde\bbeta := \argmin_{\bbeta \in \RR^d}  \widetilde\ell_n(\bbeta)$ to estimate $\bbeta^*$, where $\widetilde\ell_n(\bbeta)$ is defined as in \eqref{eq:2.4}.
 	
 	We first establish the uniform strong convexity of $\widetilde\ell_n(\bbeta)$ over $\bbeta\in \cB_2(\bbeta^*, r)$ (up to some small tolerance term) that is crucial to our subsequent statistical analysis. 
	\begin{lem}
		\label{lem:1}
		Suppose the following conditions hold: (1) $\forall i \in [n]$, $b''(\bx_i^\top \bbeta^*) \le M< \infty$, and $\forall\ \omega >0,\ \exists \ m(\omega)>0$ such that $b''(\eta) \ge m(\omega) > 0$ for $|\eta| \le \omega$;  (2) $\EE \bx_i = \bzero$, $\lambda_{\min}(\EE \bx_i\bx_i^\top) \ge \kappa_0 > 0$ and $\EE (\bv^\top\bx_i)^4 \le R < \infty$ for all $\bv\in \cS^{d-1}$; (3) $\ltwonorm{\bbeta^*} \le L < \infty$. Choose the shrinkage threshold $\tau_1 \asymp (n/\log n)^{1/4}$. For any $0 < r < 1$ and $t>0$, when $(d\log n / n) ^ {1 / 2}$ is sufficiently small, we have with probability at least $1 - 2\exp(-t)$ that for all $\bDelta \in \RR^d$ such that $\ltwonorm{\bDelta} \le r$, 
		\[
			\begin{aligned}
			 \delta \widetilde\ell_n(\bbeta^* + \bDelta; \bbeta^*) \ge \kappa \ltwonorm{\bDelta}^2 - C r^2\biggl \{ \biggl ( \frac{t}{n}\biggr ) ^ {1 / 2} + \biggl ( \frac{d}{n} \biggr ) ^ {1 / 2}\biggr \}, 
			\end{aligned}
		\] 
		where $\kappa$ and $C$ are constants. 
	\end{lem}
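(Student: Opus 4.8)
The plan is to exploit that $\delta\widetilde\ell_n(\cdot;\bbeta^*)$ depends only on the convex potential $b$ and the shrunk features, not on the clipped or corrupted responses at all. Since the term $-\widetilde z_i\widetilde\bx_i^T\bbeta$ is affine in $\bbeta$, it contributes nothing to the first-order Taylor remainder, so
\[
\delta\widetilde\ell_n(\bbeta^*+\bDelta;\bbeta^*) = \frac1n\sum_{i=1}^n\Big[b(\widetilde\bx_i^T\bbeta^*+\widetilde\bx_i^T\bDelta) - b(\widetilde\bx_i^T\bbeta^*) - b'(\widetilde\bx_i^T\bbeta^*)\,\widetilde\bx_i^T\bDelta\Big].
\]
Each summand is a Bregman divergence of $b$, hence nonnegative, and the task is a sharp lower bound. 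First I would record the elementary convexity fact that, whenever $|a|\le\omega/2$, condition (1) yields $b(a+u)-b(a)-b'(a)u\ge\tfrac12 m(\omega)\,(u^2\wedge(\omega/2)^2)$ for every $u$: for $|u|\le\omega/2$ this is the integral form of Taylor's theorem together with $b''\ge m(\omega)$ on $[-\omega,\omega]$, while for $|u|>\omega/2$ it follows because the Bregman divergence is monotone in $|u|$. Applying this with $a=\widetilde\bx_i^T\bbeta^*$ and $u=\widetilde\bx_i^T\bDelta$ reduces the claim to a lower bound, uniform over $\ltwonorm{\bDelta}\le r$, for the truncated empirical quadratic form $\frac1n\sum_i(\widetilde\bx_i^T\bDelta)^2\,\mathbf1\{|\widetilde\bx_i^T\bbeta^*|\le\omega/2,\ |\widetilde\bx_i^T\bDelta|\le\omega/2\}$, with $\omega$ to be fixed as a large constant.

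Next I would pass to the population level, writing this truncated form as the full form $\frac1n\sum_i(\widetilde\bx_i^T\bDelta)^2$ minus a correction supported on $\{|\widetilde\bx_i^T\bDelta|>\omega/2\}\cup\{|\widetilde\bx_i^T\bbeta^*|>\omega/2\}$. Since shrinkage only contracts, $\EE(\widetilde\bx_i^T\bDelta)^2\ge\kappa_0\ltwonorm{\bDelta}^2 - \EE[(\bx_i^T\bDelta)^2\mathbf1\{\lfournorm{\bx_i}>\tau_1\}]$, and the shrinkage bias is negligible by Cauchy--Schwarz together with $\PP(\lfournorm{\bx_i}>\tau_1)\le\EE\lfournorm{\bx_i}^4/\tau_1^4\le dR/\tau_1^4\asymp d\log n/n$, using $\EE\lfournorm{\bx_i}^4\le dR$ (condition (2) with standard basis vectors). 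The two truncation corrections are likewise lower order: the fourth-moment bounds $\EE(\bx_i^T\bDelta)^4\le R\ltwonorm{\bDelta}^4$ and $\EE(\bx_i^T\bbeta^*)^4\le RL^4$ give $\EE[(\widetilde\bx_i^T\bDelta)^2\mathbf1\{|\widetilde\bx_i^T\bDelta|>\omega/2\}]\le R\ltwonorm{\bDelta}^4/(\omega/2)^2$ and, by Cauchy--Schwarz, $\EE[(\widetilde\bx_i^T\bDelta)^2\mathbf1\{|\widetilde\bx_i^T\bbeta^*|>\omega/2\}]\le\sqrt R\,\ltwonorm{\bDelta}^2\sqrt{RL^4}/(\omega/2)^2$. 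Choosing $\omega$ a large constant depending only on $R,L,\kappa_0$, and using that $r$ and $d\log n/n$ are small, all three corrections are at most $\tfrac14\kappa_0\ltwonorm{\bDelta}^2$, so the expected truncated form is at least $\tfrac12\kappa_0\ltwonorm{\bDelta}^2$.

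The remaining and main step is uniform concentration of the truncated quadratic form around its mean at rate $Cr^2(\sqrt{t/n}+\sqrt{d/n})$. The sharp $r^2$ scaling comes from the quadratic, rather than merely Lipschitz, structure: on $\cB_2(\bzero,r)$ one has $\sup_{\ltwonorm{\bDelta}\le r}|\bDelta^T(\widehat\bSigma-\EE\widehat\bSigma)\bDelta| = r^2\opnorm{\widehat\bSigma-\EE\widehat\bSigma}$ for $\widehat\bSigma:=\frac1n\sum_i\widetilde\bx_i\widetilde\bx_i^T$, so the fluctuation of the full form is exactly $r^2$ times the operator-norm deviation of the $\ell_4$-shrunk second-moment matrix. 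This is precisely the object controlled by the $\ell_4$-shrinkage covariance estimate (Theorem 6 of \cite{FWZ17}): shrinkage forces $\opnorm{\widetilde\bx_i\widetilde\bx_i^T}=\ltwonorm{\widetilde\bx_i}^2\le\sqrt d\,\tau_1^2\asymp\sqrt{dn/\log n}$ (via $\ltwonorm{\bv}\le d^{1/4}\lfournorm{\bv}$), while the variance proxy is $O(dR)$; a truncated matrix Bernstein bound then gives $\opnorm{\widehat\bSigma-\EE\widehat\bSigma}\le C(\sqrt{d/n}+\sqrt{t/n})$ with probability at least $1-2e^{-t}$, the threshold $\tau_1\asymp(n/\log n)^{1/4}$ being calibrated exactly to balance the two Bernstein terms. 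The truncation corrections, being nonnegative PSD-type quadratic forms with small expected operator norm, obey analogous truncated matrix concentration and are thus absorbed at the same order. Combining the three steps and setting $\kappa=\tfrac14 m(\omega)\kappa_0$ yields the claim.

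I expect the genuine obstacle to be this last step: establishing the operator-norm deviation of the heavy-tailed, $\ell_4$-shrunk second-moment matrix at the sub-Gaussian-type rate $\sqrt{(d+t)/n}$ under only a fourth-moment hypothesis, and simultaneously verifying that the two truncations do not spoil the sharp $r^2$ dependence. The first two steps are essentially deterministic convexity and elementary moment bookkeeping; the entire probabilistic content, and the whole reason the $\ell_4$-norm (rather than another norm) is shrunk, lives in this concentration estimate.
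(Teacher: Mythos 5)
Your first two steps (the Bregman-divergence reduction with the capped square, and the population lower bound via Markov and the shrinkage-bias estimate) are sound and closely parallel the paper's proof, which uses the contraction function $\phi(\cdot\,;\alpha_1 r)$ in place of your $u^2\wedge(\omega/2)^2$. The gap is in the concentration step, and it is twofold. First, your claim that a truncated matrix Bernstein bound yields $\opnorm{\widehat\bSigma-\EE\widehat\bSigma}\le C\bigl(\sqrt{d/n}+\sqrt{t/n}\bigr)$ with probability $1-2e^{-t}$ is false. The Bernstein bound for the $\ell_4$-shrunk covariance (the paper's Lemma \ref{lem:4}, i.e.\ Theorem 6 of \cite{FWZ17}) is
\[
\PP\Bigl(\opnorm{\widehat\bSigma-\EE\widehat\bSigma}>s\Bigr)\le 2d\exp\Bigl(-c_1\min\bigl\{\tfrac{ns^2}{R(d+1)},\ \tfrac{ns}{\sqrt d\,\tau_1^2+\sqrt R}\bigr\}\Bigr),
\]
so with probability $1-2e^{-t}$ one only gets $s\lesssim \sqrt{d(t+\log d)/n}+\sqrt d\,\tau_1^2(t+\log d)/n$; the dimensional prefactor $2d$ forces a $\log d$ (equivalently $\log n$) factor, which is exactly why Lemma \ref{lem:4}(3) is stated with the rate $\sqrt{d\log n/n}$. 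Hence your route can at best deliver a tolerance of order $r^2\sqrt{d\log n/n}$, not the log-free $r^2\bigl(\sqrt{t/n}+\sqrt{d/n}\bigr)$ the lemma asserts (the degraded version would still suffice for Theorem \ref{thm:1}, but it is not the statement to be proved).

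Second, and more seriously, the correction term $\frac1n\sum_i(\widetilde\bx_i^T\bDelta)^2\,\ind_{\{|\widetilde\bx_i^T\bDelta|>\omega/2\}}$ is not a ``PSD-type quadratic form'': its indicator depends on $\bDelta$, so it is a genuinely nonlinear empirical process to which no matrix concentration applies, and because you cap at the \emph{constant} level $\omega/2$ rather than at a level proportional to $r$, it does not inherit the $r^2$ scaling. The natural bound, using $(\widetilde\bx_i^T\bDelta)^2\ind_{\{|\widetilde\bx_i^T\bDelta|>\omega/2\}}\le (\widetilde\bx_i^T\bDelta)^4/(\omega/2)^2$ and $\ltwonorm{\widetilde\bx_i}\le d^{1/4}\tau_1$, gives only $r^4\sqrt d\,\tau_1^2\opnorm{\widehat\bSigma}/(\omega/2)^2$, which for the relevant $r\asymp\sqrt{d\log n/n}$ exceeds the required tolerance by a factor of order $d\sqrt{\log n}$. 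This is precisely the obstacle the paper's proof is engineered to avoid: it truncates at level $\alpha_1 r$, so the summands $Z_i=\phi(\bDelta^T\widetilde\bx_i;\alpha_1 r)\ind_{\cB_i}$ are bounded by $\alpha_1^2r^2$ and $2\alpha_1 r$-Lipschitz in $\bDelta^T\widetilde\bx_i$; Massart's inequality then gives the $r^2\sqrt{t/n}$ deviation, and symmetrization plus the Ledoux--Talagrand contraction gives $\EE\Gamma_r\le 8\alpha_1 r^2R^{1/4}\sqrt{d/n}$, the Rademacher complexity of linear functionals over the Euclidean ball being the source of the log-free $\sqrt{d/n}$. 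If you replace your constant cap by $\alpha_1 r$ (keeping the constant-level indicator only on $|\widetilde\bx_i^T\bbeta^*|$) and swap the matrix-Bernstein step for Massart plus contraction, your argument becomes the paper's.
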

	
	\begin{rem}
		Here we explain the conditions of Lemma \ref{lem:1}. Condition (1) assumes that the response from the GLM has bounded variance and is non-degenerate when $\eta$ is bounded. Note here that we {do not assume a uniform lower bound} of $b''(\eta)$. $m(\omega)$ is allowed to decay to zero as $\omega \rightarrow \infty$. Condition (2) says that the population covariance matrix of the design vector $\bx_i$ is positive definite and $\bx_i$ has bounded fourth moment. Condition (3) is natural: it holds if we have $\var(\bx_i^{\top}\bbeta^*) < \infty$ and $\lambda_{\min}(\E \bx_i\bx_i^{\top}) \ge \kappa_0 > 0$.  Note that the ordinary least square (OLS) estimator has been shown to enjoy consistency under similar bounded fourth moment conditions (\citet{HKZ12}, \citet{ACa11}, \citet{Oli16}). Theorem 1 later establishes a similar result for the CGLM. 
	\end{rem}
	\begin{rem}
		In the proof of Theorem \ref{thm:1}, we let the radius of the local neighborhood $r$ here decay to zero so that the tolerance term $r^2\{(t / n) ^ {1 / 2} + (d / n) ^ {1 / 2}\}$ is negligible. 
	\end{rem}
    
    We are now in position to present the statistical rate of $\ltwonorm{\widetilde\bbeta - \bbeta^*}$. 
	
	\begin{thm}
		\label{thm:1}
		Suppose the conditions of Lemma \ref{lem:1} hold. We further assume that (1)  $\EE z_i^4 \le M_1< \infty$; (2) $\ltwonorm{\EE [\epsilon_i\bx_i]} \le M_2 (d / n) ^{1 / 2} $ for some constant $M_2$. Choose $\tau_1, \tau_2 \asymp (n/\log n)^{1/4}$. There exists a constant $C > 0$  such that for any $\xi > 1$, 
		\[
			\PP\biggl \{ \ltwonorm{\widetilde\bbeta - \bbeta^*} \ge  C\xi \biggl (  \frac{d \log n}{n}\biggr ) ^ {1 / 2}\biggr \} \le 3n^{1 - \xi} . 
		\]
	\end{thm}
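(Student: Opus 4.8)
The plan is to combine the local strong convexity furnished by Lemma~\ref{lem:1} with a high-probability bound on the gradient $\nabla\widetilde\ell_n(\bbeta^*)$ at the truth, using convexity of $\widetilde\ell_n$ to localize the minimizer. Write $\bDelta = \bbeta - \bbeta^*$ and $F(\bDelta) := \widetilde\ell_n(\bbeta^*+\bDelta) - \widetilde\ell_n(\bbeta^*)$. Since $b$ is convex (as $b''\ge 0$) and the fitting term is linear in $\bbeta$, the map $F$ is convex with $F(\bzero)=0$. From the decomposition $F(\bDelta) = \delta\widetilde\ell_n(\bbeta^*+\bDelta;\bbeta^*) + \nabla\widetilde\ell_n(\bbeta^*)^T\bDelta$, Lemma~\ref{lem:1} gives, uniformly over $\ltwonorm{\bDelta}=r$ with $0<r<1$ and with probability at least $1-2e^{-t}$,
\[ F(\bDelta) \ge \kappa r^2 - Cr^2\Bigl(\sqrt{t/n}+\sqrt{d/n}\Bigr) - \ltwonorm{\nabla\widetilde\ell_n(\bbeta^*)}\, r. \]
If I can show $\ltwonorm{\nabla\widetilde\ell_n(\bbeta^*)}\le C'\sqrt{d\log n/n}$ with the stated probability, then choosing $r\asymp\sqrt{d\log n/n}$ (which tends to $0$, so the tolerance term is dominated by $\kappa r$) makes the right-hand side strictly positive on the sphere $\ltwonorm{\bDelta}=r$. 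A standard convexity argument then forces the global minimizer $\widetilde\bbeta-\bbeta^*$ into the open ball $\ltwonorm{\bDelta}<r$: were it outside, the point where the segment from $\bzero$ to it crosses the sphere would satisfy $F\le 0$ by convexity and $F(\bzero)=0$, contradicting positivity. Tuning $t\asymp\xi\log n$ produces the $3n^{1-\xi}$ probability, so it remains to bound the gradient.

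The heart of the proof, and the step I expect to be hardest, is the gradient bound, where $\nabla\widetilde\ell_n(\bbeta^*)=-\frac1n\sum_i(\widetilde z_i-b'(\widetilde\bx_i^T\bbeta^*))\widetilde\bx_i$. I would split it into its expectation (a bias from shrinkage, clipping, and corruption) and a mean-zero fluctuation. For the bias I compare against the uncorrupted, unshrunk score, which is centered because $\EE[y_i\mid\bx_i]=b'(\bx_i^T\bbeta^*)$. The corruption contributes $\EE[\epsilon_i\bx_i]$, controlled by assumption (2) at the level $\sqrt{d/n}$; the clipping of $z_i$ and the $\ell_4$-shrinkage of $\bx_i$ act only on the tail events $\{|z_i|>\tau_2\}$ and $\{\lfournorm{\bx_i}>\tau_1\}$, whose contributions are of order $\tau_2^{-3}$ and $\tau_1^{-3}$ (up to factors of $d$) by the bounded fourth moments $\EE z_i^4\le M_1$ and $\EE(\bv^T\bx_i)^4\le R$; with $\tau_1,\tau_2\asymp(n/\log n)^{1/4}$ these stay at or below the target $\sqrt{d\log n/n}$.

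For the fluctuation I pass to projections, writing $\ltwonorm{\bu}=\sup_{\bv\in\cS^{d-1}}\bv^T\bu$, discretize $\cS^{d-1}$ by a $1/2$-net of cardinality at most $5^d$, and for each fixed $\bv$ apply Bernstein's inequality to the centered average of $W_i := (\widetilde z_i-b'(\widetilde\bx_i^T\bbeta^*))(\bv^T\widetilde\bx_i)$. Shrinkage and clipping render these summands bounded, $|W_i|\lesssim d^{1/2}\tau_1\tau_2$, while keeping the variance $O(1)$: since $\widetilde\bx_i$ is a scalar multiple of $\bx_i$ with factor in $(0,1]$, one has $\EE(\bv^T\widetilde\bx_i)^4\le\EE(\bv^T\bx_i)^4\le R$, and because $b''\le M$ makes $b'$ Lipschitz, $\EE b'(\widetilde\bx_i^T\bbeta^*)^4\lesssim 1+(ML)^4R$, whence $\EE W_i^2\le C''$. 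Bernstein then yields a per-direction deviation of order $\sqrt{t/n}+\tau_1\tau_2\,t/n$; absorbing the net and the target probability with $t\asymp d+\xi\log n$, the variance term contributes $\sqrt{(d+\log n)/n}$ and the bounded term, with $\tau_1\tau_2\asymp(n/\log n)^{1/2}$, reduces to the same order, so that $\ltwonorm{\nabla\widetilde\ell_n(\bbeta^*)}\lesssim\sqrt{d\log n/n}$. The threshold choice $\tau_1,\tau_2\asymp(n/\log n)^{1/4}$ is precisely what balances the Bernstein boundedness term against the variance term, and this balance is what manifests the extra $\sqrt{\log n}$ relative to the ideal $\sqrt{d/n}$ rate. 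The delicate part throughout is this threshold bookkeeping: keeping the shrinkage/clipping bias ($\sim\tau^{-3}$), the Bernstein term ($\sim\tau^2 t/n$), and the $O(1)$-variance term simultaneously at or below $\sqrt{d\log n/n}$, and checking that Lemma~\ref{lem:1}'s tolerance term is negligible once $r\asymp\sqrt{d\log n/n}$.
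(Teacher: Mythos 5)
Your overall architecture---localized strong convexity from Lemma~\ref{lem:1} plus a high-probability bound on $\ltwonorm{\nabla\widetilde\ell_n(\bbeta^*)}$, glued together by convexity of $\widetilde\ell_n$---is exactly the paper's strategy, and your localization step (positivity of $F$ on the sphere of radius $r$, versus the paper's rescaled intermediate estimator $\widetilde\bbeta_\eta$) is an equivalent variant. The genuine gap is in your fluctuation bound for the gradient. You discretize $\cS^{d-1}$ by a $1/2$-net of cardinality $5^d$ and apply scalar Bernstein per direction, which forces $t \asymp d + \xi\log n$ in the exponent to absorb the net. But then the Bernstein boundedness term is
\[
	b\,\frac{t}{n} \;\asymp\; d^{1/2}\tau_1\tau_2\cdot\frac{d+\xi\log n}{n}\;\asymp\;\frac{d^{1/2}\,(d+\xi\log n)}{\sqrt{n\log n}}
\]
once $\tau_1\tau_2\asymp\sqrt{n/\log n}$, and this exceeds the target $\xi\sqrt{d\log n/n}$ by the factor $(d+\xi\log n)/(\xi\log n)$. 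Your claim that this term ``reduces to the same order'' as the variance term is therefore only true when $d\lesssim \xi\log n$. The theorem, however, only requires $\sqrt{d\log n/n}$ to be small, so $d$ may grow nearly like $n/\log n$, and in that regime your bound is off by a factor of order $d/\log n$. This cannot be repaired by retuning the thresholds: the shrinkage/clipping bias forces $\tau_1^2,\tau_2^2\gtrsim\sqrt{n/\log n}$ (note also that with only fourth moments the feature-shrinkage bias against a test direction is of order $\sqrt{d}/\tau_1^2$ and the response-clipping bias of order $1/\tau_2^2$, via Cauchy--Schwarz on the tail event, not $\tau^{-3}$ as you state---this lands exactly at the target rate, not below it), while your net argument needs $\tau_1\tau_2\lesssim\sqrt{n\log n}/d$; these are compatible only when $d\lesssim\log n$.

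The paper's device that avoids this is the matrix Bernstein inequality applied to Hermitian dilations: $\widetilde\bZ_i$ built from $\widetilde z_i\widetilde\bx_i$ and $\widetilde\bX_i$ built from $b'(\widetilde\bx_i^T\bbeta^*)\widetilde\bx_i$. There the supremum over directions is handled inside the matrix concentration, so the dimension enters only as a prefactor $d$ in the exception probability (an additive $\log d\le\log n$ in the exponent) rather than as $5^d$, and the relevant boundedness parameter is $\opnorm{\widetilde\bZ_i}\le d^{1/4}\tau_1\tau_2$ rather than $d^{1/2}\tau_1\tau_2$ paired with $t\asymp d$. With $t\asymp\xi\log n$, both the variance term $\sqrt{d\,t/n}$ and the boundedness term $d^{1/4}\tau_1\tau_2\,t/n \asymp \xi d^{1/4}\sqrt{\log n/n}$ stay at $O(\xi\sqrt{d\log n/n})$ for all admissible $d$. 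To complete your proof you should replace the net-plus-scalar-Bernstein step with this dilation argument (or another tool whose union-bound cost is logarithmic in $d$); the rest of your outline, including the bias decomposition, the use of $\ltwonorm{\EE[\epsilon_i\bx_i]}\le M_2\sqrt{d/n}$, and the final choice $r\asymp\xi\sqrt{d\log n/n}$ with $t\asymp\xi\log n$, then goes through essentially as in the paper.
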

	
	\begin{rem}
		Condition 1 requires merely bounded fourth moments of the response from CGLM. Condition 2 requires the additional corruption to be nearly uncorrelated with the design, which is trivially satisfied if $\E(\epsilon_i | \bx_i) = 0$. 
	\end{rem}

    In some cases, the covariance between $\epsilon_i$ and $\bx_i$ does not vanish as $n$ and $d$ grow. For example, in binary logistic regression with mislabeling, we have that 
	\beq
		\label{eq:3.2}
		\begin{aligned}
			\PP(\epsilon_i = -1 | y_i = 1) = p, \PP(\epsilon_i = 0 | y_i = 1) = 1-p,\\
			\PP(\epsilon_i = 1 | y_i = 0) = p, \PP(\epsilon_i = 0 | y_i = 0) = 1-p, 
		\end{aligned}
	\eeq
	where $p< 0.5$. In other words, we flip the genuine label $y_i$ with probability $p$. Then we have
	\[
		\begin{aligned}
		\EE (\epsilon_i\bx_i) & = \EE (\epsilon_i \bx_i 1_{\{y_i=0\}}) + \EE (\epsilon_i \bx_i 1_{\{y_i = 1\}}) = p\EE (\bx_i(1_{\{y_i=0\}} - 1_{\{y_i=1\}})) = 2p\EE (\bx_i 1_{\{y_i=0\}}). 
		\end{aligned}
	\]
	The last equality holds because $\EE \bx_i = \bzero$. Therefore, $\EE (\epsilon_i\bx_i) \propto p$ and if $p$ does not decay, neither does $\EE(\epsilon_i\bx_i)$. \citet{NDR13} solve this noisy label problem through minimizing weighted negative log-likelihood
	\beq
		\label{eq:3.3}
		\begin{aligned}
		& \widehat\bbeta^w := \argmin_{\bbeta \in \RR^d} \frac{1}{n} \sum\limits_{i=1}^n \ell^w(\bx_i, z_i; \bbeta)  = \argmin_{\bbeta\in \RR^d} \frac{1}{n} \sum\limits_{i=1}^n  \frac{(1-p)\ell(\bx_i, z_i; \bbeta) - p \ell(\bx_i, 1-z_i ; \bbeta)}{1 - 2p}. 
		\end{aligned}
	\eeq
	Lemma 1 therein shows that $\EE_{\epsilon_i} \ell^w(\bx_i, z_i) = \ell(\bx_i, y_i)$. This implies that  when the sample size is sufficiently large, minimizing the weighted negative log-likelihood above is similar to minimizing the negative log-likelihood with true labels. In the presence of heavy-tailed features, we propose to replace $\bx_i$ with the $\ell_4$-norm shrunk feature $\widetilde \bx_i$, i.e., we use
	\beq
		\label{eq:3.4}
		\begin{aligned}
		\widetilde\bbeta^w & := \argmin_{\bbeta\in \RR^d} \frac{1}{n}\sum\limits_{i=1}^n \ell^w(\widetilde\bx_i, z_i; \bbeta) = \frac{1}{n}\sum\limits_{i=1}^n \frac{(1-p)\ell(\widetilde\bx_i, z_i; \bbeta) - p \ell(\widetilde\bx_i, 1- z_i; \bbeta)}{1 - 2p}
		\end{aligned}
	\eeq
	to estimate the regression vector $\bbeta^*$. The following corollary establishes the statistical error rate of $\widetilde\bbeta^w$ with an exponential deviation bound. 
	
	\begin{cor}
		\label{cor:1}
		Under the logistic regression with random corruption $\epsilon_i$ satisfying \eqref{eq:3.2}, choose $\tau_1\asymp (n / \log n)^{1/4}$. Under the conditions of Lemma \ref{lem:1}, it holds for some constant $C$ and any $\xi > 1$ such that when $(d \log d/n) ^ {1 / 2}$ is sufficiently small, 
		\[
			\PP\biggl \{ \ltwonorm{\widetilde\bbeta^w - \bbeta^*} \ge C \xi \biggl ( \frac{ d\log n}{n}\biggr ) ^ {1 / 2}\biggr \} \le 2n^{1-\xi}. 
		\]	
	\end{cor}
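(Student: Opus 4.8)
The plan is to recognize that the weighted loss on shrunk features is, after one algebraic simplification, nothing but the ordinary shrunk GLM loss of Theorem~\ref{thm:1} evaluated at a \emph{corrected, bounded} response, so that almost all of the work is already done. Substituting the logistic loss $\ell(\bx, y; \bbeta) = -y\bx^T\bbeta + b(\bx^T\bbeta)$ into \eqref{eq:3.4} and collecting terms, the two responses $z_i$ and $1-z_i$ combine linearly and the likelihood collapses to
\[
	\ell^w(\widetilde\bx_i, z_i; \bbeta) = -w_i\,\widetilde\bx_i^T\bbeta + b(\widetilde\bx_i^T\bbeta), \qquad w_i := \frac{z_i - p}{1-2p}.
\]
Hence $\widetilde\ell_n^w(\bbeta) := \frac{1}{n}\sum_{i=1}^n \ell^w(\widetilde\bx_i, z_i; \bbeta)$ is \emph{exactly} the shrunk negative log-likelihood of Theorem~\ref{thm:1} with the clipped response replaced by the corrected label $w_i$. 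In particular its Hessian is the response-free $\frac{1}{n}\sum_{i=1}^n b''(\widetilde\bx_i^T\bbeta)\widetilde\bx_i\widetilde\bx_i^T = \widetilde\bH_n(\bbeta)$, so Lemma~\ref{lem:1} applies to $\widetilde\ell_n^w$ verbatim and furnishes the restricted strong convexity on $\cB_2(\bbeta^*, r)$ with no modification. My whole strategy is thus to run the proof of Theorem~\ref{thm:1} with $w_i$ in the role of the response, after checking that $w_i$ meets its two response hypotheses.

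It then remains to verify those two hypotheses for the corrected label. First, since $z_i \in \{0,1\}$ and $p<1/2$ is a constant, $w_i$ takes only the two bounded values $-p/(1-2p)$ and $(1-p)/(1-2p)$; thus $\EE w_i^4 < \infty$ automatically (Condition~1), and, crucially, $w_i$ requires \emph{no} clipping, which is why I expect the final probability to carry the factor $2n^{1-\xi}$ rather than the $3n^{1-\xi}$ of Theorem~\ref{thm:1}, the response-clipping tail event being absent. Second, a one-line computation under \eqref{eq:3.2} gives $\EE[w_i \mid y_i] = y_i$; since the label flip is independent of $\bx_i$ given $y_i$, the effective corruption $w_i - y_i$ obeys $\EE[(w_i - y_i)\bx_i] = \EE[\bx_i\,\EE(w_i - y_i \mid y_i)] = \bzero$, so Condition~2 holds with $M_2 = 0$. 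Equivalently $\EE[w_i \mid \bx_i] = b'(\bx_i^T\bbeta^*)$, so the score built from $w_i$ is conditionally centered up to the shrinkage bias.

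With both conditions in hand, I would transcribe the gradient step of Theorem~\ref{thm:1}. The gradient $\nabla\widetilde\ell_n^w(\bbeta^*) = \frac{1}{n}\sum_{i=1}^n (b'(\widetilde\bx_i^T\bbeta^*) - w_i)\widetilde\bx_i$ splits into a shrinkage-bias piece $\frac{1}{n}\sum_i (b'(\widetilde\bx_i^T\bbeta^*) - b'(\bx_i^T\bbeta^*))\widetilde\bx_i$, controlled via $\tau_1 \asymp (n/\log n)^{1/4}$ and the bounded fourth moment of $\bv^T\bx_i$, and a mean-zero stochastic piece handled by a Bernstein-type bound over a covering of $\cS^{d-1}$, together giving $\ltwonorm{\nabla\widetilde\ell_n^w(\bbeta^*)} = O_{\PP}(\sqrt{d\log n/n})$ with exponential tail at $t \asymp \xi\log n$. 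Combining this with Lemma~\ref{lem:1}, the optimality inequality $\widetilde\ell_n^w(\widetilde\bbeta^w) \le \widetilde\ell_n^w(\bbeta^*)$ rearranges to $\kappa\ltwonorm{\widetilde\bbeta^w - \bbeta^*}^2 \le \ltwonorm{\nabla\widetilde\ell_n^w(\bbeta^*)}\,\ltwonorm{\widetilde\bbeta^w - \bbeta^*} + Cr^2(\sqrt{t/n} + \sqrt{d/n})$; letting $r$ decay so the tolerance is dominated (as in the remark following Lemma~\ref{lem:1}) and using the standard convexity argument to confine the minimizer to the local ball yields $\ltwonorm{\widetilde\bbeta^w - \bbeta^*} \le C\xi\sqrt{d\log n/n}$ with probability at least $1 - 2n^{1-\xi}$.

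I expect essentially no new analytic obstacle beyond Theorem~\ref{thm:1}: the linear-in-$y$ structure of the logistic loss makes the reduction to a bounded corrected response exact, the curvature then comes for free from Lemma~\ref{lem:1}, and the gradient bound is a transcription of the clean-data argument. The only genuinely new steps are the algebraic collapse of the weighted loss to $-w_i\widetilde\bx_i^T\bbeta + b(\widetilde\bx_i^T\bbeta)$ and the conditional-mean identity $\EE[w_i \mid y_i] = y_i$ under the flip model \eqref{eq:3.2}; the mild smallness requirement on $\sqrt{d\log d/n}$ enters, as in Lemma~\ref{lem:1}, only to render the Taylor-remainder tolerance negligible in the localization step.
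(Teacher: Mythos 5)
Your proposal is correct and takes essentially the same route as the paper's proof: the paper likewise gets curvature for free from Lemma \ref{lem:1} (noting the weighted Hessian equals $\widetilde\bH_n(\bbeta)$) and bounds the gradient by splitting it into centered terms handled as the $T_1$ bound of Theorem \ref{thm:1} plus the shrinkage bias $\EE(b'(\widetilde\bx_i^T\bbeta^*)-y_i)\widetilde\bx_i$ handled as $T_{23}$ and $T_{24}$; your corrected response $w_i=(z_i-p)/(1-2p)$ is just a clean repackaging of the paper's weighted decomposition, and your identity $\EE[w_i\mid y_i]=y_i$ is exactly the unbiasedness fact from \cite{NDR13} that the paper invokes. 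One minor quibble: your attribution of the $2n^{1-\xi}$ (versus $3n^{1-\xi}$) tail to the absence of a response-clipping event is a rationalization of a nonexistent structural difference --- in both proofs the failure probability is $n^{1-\xi}+2n^{-\xi}$ (gradient plus strong convexity), clipping only ever contributes deterministic bias, and the differing constants are just loose bookkeeping.
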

	
	\begin{rem}
		Here we do not need to truncate the response by $\tau_2$ because in logistic regression the response is always bounded. 
	\end{rem}
	\subsection{High-dimensional regime}
	
	In this section, we consider the regime where the dimension $d$ grows much faster than the sample size $n$. Recall that the standard $\ell_1$-regularized MLE of the regression vector $\bbeta^*$ under the GLM is
	\beq
		\label{eq:3.1}
		\hat\bbeta := \argmin_{\bbeta\in \RR^d} 
		\frac{1}{n}\sum\limits_{i=1}^n \bigl(-y_i\bx_i^\top\bbeta+b(\bx_i^\top\bbeta)\bigr) + \lambda\lonenorm{\bbeta},
	\eeq
	where $(y_i, \bx_i)$ comes from the GLM \eqref{eq:2.1} and $\lambda>0$ is a tuning parameter. \citet{NRW12} show that $\|\widehat\bbeta-\bbeta^*\|_2 = O_{\PP}\{(s\log d/n) ^ {1 / 2}\}$ under the GLM when $\{\bx_i\}_{i=1}^n$ are sub-Gaussian. However, in the presence of heavy-tailed features $\bx_i$ and corruption $\epsilon_i$, the statistical accuracy of $\hat\bbeta$ might deteriorate if we directly evaluate the log-likelihood \eqref{eq:3.1} on $\{(z_i, \bx_i)\}_{i=1}^n $. Our goal is to develop a robust $\ell_1$-regularized MLE for $\bbeta^*$. Let $\widetilde \bx_i$ be the elementwise shrunk version of $\bx_i$ such that for any $j \in [d]$, 
	\[
		\widetilde x_{ij}:= \min (|x_{ij}|, \tau_1) x_{ij}/|x_{ij}|. 
	\] 
	Construct $\widetilde z_i$ as in \eqref{eq:clip_response}. We propose the following $\widetilde\bbeta$ that minimizes the negative log-likelihood on the shrunk data with $\ell_1$-norm regularization: 
	\[
		\widetilde\bbeta := \argmin_{\bbeta\in \RR^d}  \widetilde\ell_n(\bbeta)+\lambda\lonenorm{\bbeta}, 
	\]
	where $\widetilde \ell_n(\bbeta)$ is defined as in \eqref{eq:2.4}, and where $\lambda$ is a tuning parameter. For $\cS \subset [d]$ and $|\cS|=s$, define the restricted cone $\cC(\cS):=\{\bv \in \RR^d: \lonenorm{\bv_{\cS^c}} \le
	3\lonenorm{\bv_{\cS}}\}$. By Lemma 1 in \citet{NRW12}, when $\lambda > 2\supnorm{\nabla \widetilde \ell_n(\bbeta)}$, $\widetilde \bbeta - \bbeta^* \in \cC(\cS)$, which is a crucial property that gives rise to statistical consistency of $\widetilde \bbeta$ under high-dimensional regimes. Therefore, in the following we first present a lemma that characterizes the order of $\|\nabla_{\bbeta} \widetilde\ell_n(\bbeta^*)\|_{\max}$. 
	\begin{lem}
		\label{lem:2}
		Under the following conditions: (1) $\forall i \in [n]$, $b''(\bx_i^\top \bbeta^*) \le M< \infty$ and $\forall\ \omega >0,\ \exists \ m(\omega)>0$ such that $b''(\eta) \ge m(\omega) > 0$ for $|\eta| \le \omega$; (2) $\EE x_{ij}=0$, $\EE x^2_{ij}x^2_{ik} \le R < \infty$ for all $1 \le j, k\le d$; (3) $\EE z_i^4 \le M_1$ and $\EE \epsilon_i^4 \le M_1$; (4) $\lonenorm{\bbeta^*}\le L < \infty$; (5)  $|\EE \epsilon_i x_{ij}| \le M_2/ n ^ {1 / 2}$ for some universal constant $M_2< \infty$ and all $1 \le j \le d$. With $\tau_1, \tau_2 \asymp (n/\log d)^{1/4}$, for any $\xi>1$ we have that
		\[ 
			\PP\biggl \{ \supnorm{\nabla \widetilde \ell(\bbeta^*)}\ge C\xi \biggl ( \frac{\log d}{n}\biggr ) ^ {1 / 2}\biggr \}  \le 2d^{1-\xi}.
		\]
	\end{lem}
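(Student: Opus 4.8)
The plan is to analyze $\nabla\widetilde\ell_n(\bbeta^*)$ one coordinate at a time. Writing $W_{ij} := (b'(\widetilde\bx_i^T\bbeta^*) - \widetilde z_i)\widetilde x_{ij}$, the $j$-th entry of the gradient is the empirical average $n^{-1}\sum_{i=1}^n W_{ij}$, which I would split into a centered stochastic part $n^{-1}\sum_i (W_{ij} - \EE W_{ij})$ and a deterministic bias $\EE W_{ij}$. The target bound $C\xi\sqrt{(\log d)/n}$ then follows by controlling each part at that order and taking a union bound over $j \in [d]$, which converts a per-coordinate tail of order $d^{-\xi}$ into the stated $2d^{1-\xi}$.

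For the stochastic part I would invoke Bernstein's inequality, which needs a uniform envelope and a variance proxy for the independent summands $W_{ij}$. Since clipping forces $|\widetilde x_{ij}| \le \tau_1$ and $|\widetilde z_i| \le \tau_2$, while $b'$ is $M$-Lipschitz by condition (1) and $|\widetilde\bx_i^T\bbeta^*| \le \tau_1\lonenorm{\bbeta^*} \le \tau_1 L$, every summand is bounded by $B := O(\tau_1^2 + \tau_1\tau_2) = O(\sqrt{n/\log d})$. The variance bound is the more delicate ingredient: I would show $\EE W_{ij}^2 = O(1)$. The cross term $\EE[\widetilde z_i^2\widetilde x_{ij}^2]$ is handled by Cauchy--Schwarz together with $\EE z_i^4 \le M_1$ and $\EE x_{ij}^4 \le R$ (condition (2) with $k=j$); for the term involving $b'(\widetilde\bx_i^T\bbeta^*)^2$ I would use the Lipschitz bound $b'(u)^2 \lesssim 1 + u^2$ and then the key estimate $\EE(\widetilde\bx_i^T\bbeta^*)^4 \le R\lonenorm{\bbeta^*}^4 \le RL^4$, obtained by expanding the fourth power and applying generalized H\"older to reduce every mixed moment $\EE[\widetilde x_{ik}\widetilde x_{il}\widetilde x_{im}\widetilde x_{ip}]$ to $R$. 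With envelope $B$ and variance $O(1)$, plugging $t = C\xi\sqrt{(\log d)/n}$ into Bernstein makes the exponent at least $\xi\log d$ once $C$ is large enough (checking separately the variance-dominated and envelope-dominated regimes; in both the scaling $\tau_1\tau_2 \asymp \sqrt{n/\log d}$ is exactly what keeps $Bt = O(\xi)$).

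For the bias I would compare $\EE W_{ij}$ against the uncorrupted, unclipped score $\EE[(b'(\bx_i^T\bbeta^*) - z_i)x_{ij}]$, which equals $-\EE[\epsilon_i x_{ij}]$ because $\EE[y_i \mid \bx_i] = b'(\bx_i^T\bbeta^*)$ under the genuine GLM, hence is at most $M_2/\sqrt{n} \le M_2\sqrt{(\log d)/n}$ by condition (5). The remaining clipping bias I would split into feature-clipping and response-clipping contributions and bound each by H\"older combined with the tail estimates $\PP(|x_{ij}| > \tau_1) \le R/\tau_1^4$ and $\PP(|z_i| > \tau_2) \le M_1/\tau_2^4$; for the genuinely cross term $\EE[(b'(\widetilde\bx_i^T\bbeta^*) - b'(\bx_i^T\bbeta^*))\widetilde x_{ij}]$ I would again use Lipschitzness to pass to $\sum_k |x_{ik}|\ind_{\{|x_{ik}|>\tau_1\}}|\beta_k^*|$ and bound each summand by $R/\tau_1^2$ via Cauchy--Schwarz on $\EE[x_{ik}^2 x_{ij}^2] \le R$. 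All these pieces are $O(1/\tau_1^2 + 1/\tau_2^2) = O(\sqrt{(\log d)/n})$ and are absorbed into the final constant.

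The main obstacle, and the step I expect to demand the most care, is the pairing of the variance control $\EE W_{ij}^2 = O(1)$ with the clipping-bias estimate, since both must extract a constant from only fourth-moment assumptions even though the clipped score carries the potentially large factor $b'(\widetilde\bx_i^T\bbeta^*)$. The device that makes this work is that $\bbeta^*$ enters solely through $\lonenorm{\bbeta^*} \le L$, so the H\"older reduction of $\EE(\widetilde\bx_i^T\bbeta^*)^4$ to $R\lonenorm{\bbeta^*}^4$ never requires a joint moment or sparsity assumption on the coordinates of $\bx_i$. Tuning $\tau_1, \tau_2 \asymp (n/\log d)^{1/4}$ so that it simultaneously drives the bias to order $\sqrt{(\log d)/n}$ and keeps the Bernstein envelope small enough is the quantitative heart of the argument.
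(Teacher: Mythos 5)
Your proposal is correct, and it shares the paper's overall skeleton: work coordinate by coordinate, split the $j$-th gradient entry into a centered empirical average plus a deterministic bias, control the former by Bernstein's inequality and the latter by Cauchy--Schwarz with Markov tail bounds, then take a union bound over $j \in [d]$. Your bias treatment (comparing to the unclipped score, using $\EE[y_i \mid \bx_i] = b'(\bx_i^T\bbeta^*)$, condition (5), and the tails $\PP(|x_{ij}|>\tau_1) \le R/\tau_1^4$, $\PP(|z_i|>\tau_2)\le M_1/\tau_2^4$) is essentially identical to the paper's bound on its term $T_2$. Where you genuinely diverge is the concentration step. The paper separates the $\widetilde z_i \widetilde x_{ij}$ part from the $b'(\widetilde\bx_i^T\bbeta^*)\widetilde x_{ij}$ part and linearizes the latter via the mean value theorem, $b'(\widetilde\bx_i^T\bbeta^*) = b'(1) + b''(\xi_i)(\widetilde\bx_i^T\bbeta^* - 1)$, which produces the weighted sum $\sum_k |\beta^*_k| \, \bigl| n^{-1}\sum_i b''(\xi_i)\widetilde x_{ij}\widetilde x_{ik} - \EE(\cdot)\bigr|$ and hence a separate Bernstein application for each pair $(j,k)$. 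You instead apply Bernstein once to the full centered summand $W_{ij}$, extracting the $O(1)$ variance proxy from $b'(u)^2 \lesssim 1+u^2$ together with the expansion $\EE(\widetilde\bx_i^T\bbeta^*)^4 \le R\lonenorm{\bbeta^*}^4$ by generalized H\"older --- an estimate the paper never needs, and which is valid here since each mixed moment reduces to $\EE x_{ij}^4 \le R$ (condition (2) with $k=j$). Your route buys two things: it is more economical (one Bernstein application per coordinate), and it sidesteps an accounting issue in the paper's write-up, whose claimed failure probability $6\exp(-t)$ for the $T_1$ bound does not reflect the union over $k \in [d]$ that the per-pair deviations actually require; your version needs a union over $j$ only and yields the stated $2d^{1-\xi}$ directly. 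What the paper's mean-value route buys in exchange is that it only ever manipulates pairwise moments $\EE x_{ij}^2 x_{ik}^2 \le R$ and never a fourth moment of the full linear form $\widetilde\bx_i^T\bbeta^*$, keeping each individual Bernstein application elementary.
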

%
	Another requirement for the statistical guarantee of $\widetilde \bbeta$ is the restricted strong convexity (RSC) of $\widetilde \ell_n$, which is first formulated in \citet{NRW12}. RSC ensures that $\widetilde \ell_n(\bbeta)$ is ``not too flat'', so that if $|\widetilde\ell_n(\widetilde \bbeta) - \widetilde \ell_n(\bbeta^*)| $ is small, then $\widetilde\bbeta$ and $\bbeta^*$ are close. In high-dimensional sparse linear regression, RSC is implied by the restricted eigenvalue (RE) condition (\citet{BRT09}, \citet{van07}, etc.), a widely studied and acknowledged condition for statistical error analysis of the Lasso estimator. Unlike the quadratic loss in linear regression, the negative log-likelihood $\widetilde \ell_n(\bbeta)$ has its hessian matrix $\widetilde\bH_n(\bbeta)$ depend on $\bbeta$, which creates technical difficulty of verifying its RSC. Here we establish localized RSC (LRSC) of $\tilde \ell(\bbeta)$, i.e., RSC with $\bbeta$ constrained within a small neighborhood of $\bbeta^*$, which has been shown to suffice for statistical analysis of regularized M-estimators in the high-dimensional regime (\citet{FLS17}, \citet{SZF17}). Formally, we say a loss function $\cL(\bbeta)$ satisfies LRSC($\bbeta^*, r, \cS, \kappa, \tau_{\cL}$) if for any $\bDelta \in \cC(\cS) \cap \cB_2(\bzero, r)$, 
	\[
		\delta\cL(\bbeta^* + \bDelta; \bbeta^*) \ge \kappa \ltwonorm{\bDelta}^2 - \tau_{\cL},
	\]
	where $\tau_{\cL}$ is a small tolerance term. The following lemma establishes the LRSC of $\widetilde\ell_n(\bbeta)$. 

	\begin{lem}
		\label{lem:3}
		Suppose the conditions of Lemma \ref{lem:2} hold. Let $\cS$ be the true support of $\bbeta^*$ with $|\cS| = s$. Assume that for any $\bv \in \RR^d$ such that $\bv \in \cC(\cS)$ and $\ltwonorm{\bv} = 1$, $0<\kappa_0  \le \bv^\top\EE (\bx_i\bx_i^\top)\bv \le \kappa_1< \infty$. Set $\tau_1 \asymp (n / \log d)^{1 / 4}$. For any $0<r<1$ and $t>0$, as long as $s^2\log d/ n$ is sufficiently small, we have with probability at least $1 - 2\exp(-t)$ that for any $\bDelta \in \cC(\cS) \cap \cB_2(\bzero, r)$, 
		\[
			\begin{aligned}
			 \delta \widetilde\ell_n(\bbeta^* & + \bDelta; \bbeta^*)  \ge \kappa\ltwonorm{\bDelta}^2 - C_0  r^2 \biggl \{ \biggl ( \frac{t}{n}\biggr ) ^ {1 / 2} + \biggl ( \frac{s \log d}{n}\biggr ) ^ {1 / 2}\biggr \},
			\end{aligned}
		\]
		where $\kappa$ and $C_0$ are constants. 
	\end{lem}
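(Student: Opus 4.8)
The plan is to peel the corrupted response off the remainder, convert the curvature of $b$ into a truncated empirical quadratic form, and then reduce the whole statement to a restricted--eigenvalue estimate for the clipped Gram matrix. First I would put the remainder in integral form: since $-\widetilde z_i\widetilde\bx_i^T\bbeta$ is affine in $\bbeta$ it cancels in $\delta\widetilde\ell_n(\bbeta^*+\bDelta;\bbeta^*)$, leaving the response-free expression
\[
\delta\widetilde\ell_n(\bbeta^*+\bDelta;\bbeta^*)=\frac1n\sum_{i=1}^n(\widetilde\bx_i^T\bDelta)^2\int_0^1(1-t)\,b''\bigl(\widetilde\bx_i^T\bbeta^*+t\,\widetilde\bx_i^T\bDelta\bigr)\,dt .
\]
Writing $\eta_i:=\widetilde\bx_i^T\bbeta^*$ and $v_i:=\widetilde\bx_i^T\bDelta$, condition (1) yields the elementary convexity bound that on the event $A_i:=\{|\eta_i|\le\omega/2\}$ and for any truncation level $\rho\le\omega/2$ the $i$-th summand is at least $\tfrac{m(\omega)}2\,(v_i^2\wedge\rho^2)\mathbf{1}_{A_i}$, because on $A_i$ the whole segment from $\eta_i$ to $\eta_i+v_i$ stays in $[-\omega,\omega]$ as long as $|v_i|\le\rho$, while for $|v_i|>\rho$ the monotonicity of the remainder in $|v_i|$ takes over. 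Choosing $\rho\propto r$ reduces the claim to a uniform lower bound for $\frac1n\sum_i(v_i^2\wedge\rho^2)\mathbf{1}_{A_i}$ over $\bDelta\in\cC(\cS)\cap\cB_2(\bzero,r)$.

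The clean backbone is the untruncated quadratic $\bDelta^T\widehat\bSigma\bDelta$ with $\widehat\bSigma:=\frac1n\sum_i\widetilde\bx_i\widetilde\bx_i^T$, which I would compare to $\bSigma:=\EE\bx_i\bx_i^T$ through the duality bound $|\bDelta^T(\widehat\bSigma-\bSigma)\bDelta|\le\supnorm{\widehat\bSigma-\bSigma}\lonenorm{\bDelta}^2$. Each clipped product $\widetilde x_{ij}\widetilde x_{ik}$ is bounded by $\tau_1^2\asymp\sqrt{n/\log d}$ and has variance at most $R$ by condition (2), so a Bernstein inequality with a union bound over the $d^2$ entries gives $\supnorm{\widehat\bSigma-\EE\widehat\bSigma}=O(\sqrt{\log d/n})$, and the clipping bias obeys $\supnorm{\EE\widehat\bSigma-\bSigma}=O(\tau_1^{-2})=O(\sqrt{\log d/n})$ again by the fourth--moment control. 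On the cone $\lonenorm{\bDelta}^2\le16s\ltwonorm{\bDelta}^2$, so the error is of order $s\sqrt{\log d/n}\,\ltwonorm{\bDelta}^2$; combined with the restricted eigenvalue $\bDelta^T\bSigma\bDelta\ge\kappa_0\ltwonorm{\bDelta}^2$ this produces $\bDelta^T\widehat\bSigma\bDelta\ge\tfrac{\kappa_0}2\ltwonorm{\bDelta}^2$ uniformly, and it does so exactly when $s^2\log d/n$ is small --- this is the step that consumes the scaling assumption.

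It then remains to pass from $\bDelta^T\widehat\bSigma\bDelta$ to the truncated form and to produce the stated tolerance, by writing $\frac1n\sum_i(v_i^2\wedge\rho^2)\mathbf{1}_{A_i}=\bDelta^T\widehat\bSigma\bDelta-D(\bDelta)$ with defect $D(\bDelta)=\frac1n\sum_i[v_i^2\mathbf{1}_{A_i^c}+(v_i^2-\rho^2)_+\mathbf{1}_{A_i}]$ and bounding $D$ in expectation and uniformly. For the expectation one uses $\mathbf{1}_{A_i^c}\le 4\eta_i^2/\omega^2$ together with a directional fourth--moment bound for the clipped features on the cone and the boundedness of $\eta_i$ through $\lonenorm{\bbeta^*}\le L$. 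For the uniform fluctuation, the truncation at level $\rho\propto r$ makes each summand at most $\rho^2=O(r^2)$, so a bounded--differences inequality yields the $r^2\sqrt{t/n}$ tail at level $\exp(-t)$, while symmetrisation and the Ledoux--Talagrand contraction linearise the truncated quadratic and reduce the expected supremum over the cone to a max--norm control of $\frac1n\sum_i\varepsilon_i\widetilde\bx_i\widetilde\bx_i^T$, which supplies the $r^2\sqrt{s\log d/n}$ term. Folding in the curvature factor $m(\omega)$ and taking $\kappa=\tfrac12 m(\omega)\,\tfrac{\kappa_0}2$ gives the asserted bound on an event of probability at least $1-2\exp(-t)$.

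The main obstacle is the population part of the defect $\EE D(\bDelta)$, and it is exactly where condition (2) being only a pairwise fourth--moment bound bites. Controlling $\EE[v_i^2\mathbf{1}_{A_i^c}]$ requires the directional fourth moment $\EE(\widetilde\bx_i^T\bDelta)^4$; if this is estimated with the pairwise bound alone it inflates to order $s^2\ltwonorm{\bDelta}^4$ on the cone, which would force the truncation window $\omega$ to grow with $s$ and make $m(\omega)$, hence $\kappa$, dimension--dependent. Keeping $\kappa$ a genuine constant therefore hinges on a directional (cone--restricted) fourth--moment control of the clipped features --- the analogue of the bound $\EE(\bx_i^T\bv)^4\le R$ used in the low--dimensional Lemma \ref{lem:1} --- and reconciling this with the requirement that $\omega$ be large enough for $m(\omega)$ to stay bounded away from zero is the delicate heart of the argument; the max--norm deviations, the clipping bias, and the contraction step are routine once $\tau_1\asymp(n/\log d)^{1/4}$ is fixed.
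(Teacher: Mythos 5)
Your opening reduction is sound: the integral form of the remainder together with the pointwise bound (summand) $\ge \tfrac{m(\omega)}{2}(v_i^2\wedge\rho^2)1_{A_i}$ is a correct variant of the paper's contraction-function device (the map $x\mapsto x^2\wedge\rho^2$ is $2\rho$-Lipschitz, exactly like the paper's $\phi(\cdot\,;\alpha_1 r)$), and your entrywise treatment of the clipping \emph{bias} matches the paper's bound \eqref{eq:5.18}. The genuine gap is in your third paragraph, at the decomposition $\tfrac1n\sum_i(v_i^2\wedge\rho^2)1_{A_i}=\bDelta^T\widehat\bSigma\bDelta-D(\bDelta)$ combined with the claim that bounded differences applies to $D$ because ``the truncation makes each summand at most $\rho^2$''. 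That boundedness is a property of the truncated form, not of its defect: the summands of $D(\bDelta)$ are $v_i^2 1_{A_i^c}+(v_i^2-\rho^2)_+1_{A_i}$, i.e.\ precisely the mass that \emph{escapes} the truncation, and on the cone they are bounded only by $v_i^2\le\tau_1^2\lonenorm{\bDelta}^2\lesssim s\,r^2\sqrt{n/\log d}$, which diverges with $n$. So Massart/McDiarmid gives nothing useful for $D$, and symmetrization plus Ledoux--Talagrand contraction does not apply to it either, since $x\mapsto(x^2-\rho^2)_+$ has unbounded Lipschitz constant over the range of $v_i$. Controlling $\sup_{\bDelta\in\cC(\cS)\cap\cB_2(\bzero,r)}D(\bDelta)$ \emph{empirically} is exactly the kind of unbounded, heavy-tailed supremum that the truncation is designed to avoid. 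The paper never faces it: it concentrates the bounded, Lipschitz quantity $Z_i=\phi(\bDelta^T\widetilde\bx_i;\alpha_1 r)\ind_{\cB_i}$ around its \emph{population} mean (Massart for the deviation, giving the $r^2\sqrt{t/n}$ term; symmetrization and contraction for $\EE\Gamma_r$, giving the $r^2\sqrt{s\log d/n}$ term), and the defect is bounded only in expectation, via Markov, the moment conditions, and \eqref{eq:5.18}. Once you repair your argument this way, the empirical Gram backbone $\bDelta^T\widehat\bSigma\bDelta$ becomes superfluous and you are back to the paper's proof.

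Two secondary points. First, even where your backbone step is valid, Bernstein plus a union bound over the $d^2$ entries fails with probability of order $d^{-c}$ for a fixed $c$; the lemma demands probability $1-2\exp(-t)$ for \emph{every} $t>0$, and in the paper the only stochastic step is Massart's inequality (the expected supremum is bounded deterministically), which is what produces the clean $\exp(-t)$ tail with a tolerance whose constants carry no extra powers of $s$ --- your route, patched by enlarging the Bernstein deviation, would pick up an extra factor of $s$ on the $\sqrt{t/n}$ term. Second, your closing concern about directional fourth moments is well taken: the paper's own display bounds $\sqrt{\EE(\bDelta^T\bx_i)^4\,(\PP(\cA_i^c)+\PP(\cB_i^c))}$ by $\sqrt{R(\PP(\cA_i^c)+\PP(\cB_i^c))}\,\ltwonorm{\bDelta}^2$, which silently invokes the directional bound $\EE(\bv^T\bx_i)^4\le R\ltwonorm{\bv}^4$ from Lemma \ref{lem:1}, whereas the pairwise condition of Lemma \ref{lem:2} alone would inflate this by $s^2$ on the cone; the paper mitigates (but does not remove) the issue by running Markov with second moments for $\PP(\cA_i^c)$ and $\PP(\cB_i^c)$, and the fourth moment enters only through that single Cauchy--Schwarz step.
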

	
	Combining Lemmas \ref{lem:2} and \ref{lem:3} yields the statistical guarantee of $\widetilde\bbeta$ as follows. 
	
	\begin{thm}
		\label{thm:2}
		Under the assumptions of Lemma \ref{lem:2} and \ref{lem:3}, choose $\lambda = 2C\xi(\log d / n) ^ {1 / 2}$ and $\tau_1, \tau_2 \asymp (n/ \log d)^{1/4}$, where $\xi$ and $C$ are the same constants as in Lemma \ref{lem:2}. Then there exists a constant $C_1 > 0$ such that 
		\[
			\PP\biggl \{  \ltwonorm{\widetilde\bbeta - \bbeta^*} \ge C_1\xi \biggl ( \frac{s \log d}{n}\biggr ) ^ {1 / 2} \biggr \} \le 4d^{1- \xi}. 
		\]
	\end{thm}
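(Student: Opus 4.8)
The plan is to follow the now-standard analysis of $\ell_1$-regularized $\cM$-estimators from \cite{NRW12}, feeding in the two ingredients already established: the gradient deviation bound of Lemma \ref{lem:2} and the localized restricted strong convexity of Lemma \ref{lem:3}. Write $\hat\bDelta := \widetilde\bbeta - \bbeta^*$ and set the target radius $r_* := C_1 \xi \sqrt{s \log d / n}$, which lies below $1$ once $s \log d / n$ is small. Introduce the convex objective $F(\bDelta) := \widetilde\ell_n(\bbeta^* + \bDelta) - \widetilde\ell_n(\bbeta^*) + \lambda ( \lonenorm{\bbeta^* + \bDelta} - \lonenorm{\bbeta^*} )$, which satisfies $F(\bzero) = 0$ and $F(\hat\bDelta) \le 0$ by optimality of $\widetilde\bbeta$. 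On the event of Lemma \ref{lem:2}, the choice $\lambda = 2 C \xi \sqrt{\log d / n}$ forces $\supnorm{\nabla \widetilde\ell_n(\bbeta^*)} \le \lambda / 2$; combining this with $F(\hat\bDelta) \le 0$, the nonnegativity of the Taylor remainder $\delta\widetilde\ell_n(\bbeta^* + \hat\bDelta; \bbeta^*)$ (by convexity of $b$), and the decomposition $\lonenorm{\bbeta^* + \hat\bDelta} - \lonenorm{\bbeta^*} \ge \lonenorm{\hat\bDelta_{\cS^c}} - \lonenorm{\hat\bDelta_{\cS}}$ yields the cone membership $\hat\bDelta \in \cC(\cS)$, exactly as in Lemma 1 of \cite{NRW12}.

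Next I would localize. Since Lemma \ref{lem:3} only certifies strong convexity inside $\cB_2(\bzero, r)$, it cannot be applied to $\hat\bDelta$ directly. Instead I argue by contradiction: suppose $\ltwonorm{\hat\bDelta} > r_*$, and let $\bar\bDelta := r_* \hat\bDelta / \ltwonorm{\hat\bDelta}$ be the point where the segment $[\bzero, \hat\bDelta]$ meets the sphere of radius $r_*$. Writing $\bar\bDelta = \theta \hat\bDelta$ with $\theta \in (0,1)$, convexity of $F$ together with $F(\bzero) = 0$ and $F(\hat\bDelta) \le 0$ gives $F(\bar\bDelta) \le \theta F(\hat\bDelta) \le 0$, while the cone $\cC(\cS)$ is invariant under positive scaling, so $\bar\bDelta \in \cC(\cS)$ and $\ltwonorm{\bar\bDelta} = r_*$. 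I may therefore invoke Lemma \ref{lem:3} with $r = r_*$ and $t = (\xi - 1)\log d$ to obtain $\delta\widetilde\ell_n(\bbeta^* + \bar\bDelta; \bbeta^*) \ge \kappa r_*^2 - C_0 r_*^2 (\sqrt{t/n} + \sqrt{s \log d / n})$, where the factor $\sqrt{t/n} + \sqrt{s \log d / n} \to 0$, so the subtracted term is at most $\tfrac{\kappa}{2} r_*^2$ for $n$ large.

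Combining the pieces at $\bar\bDelta$, I would lower bound $F(\bar\bDelta) = \delta\widetilde\ell_n(\bbeta^* + \bar\bDelta; \bbeta^*) + \nabla\widetilde\ell_n(\bbeta^*)^T \bar\bDelta + \lambda(\lonenorm{\bbeta^* + \bar\bDelta} - \lonenorm{\bbeta^*})$ using $\nabla\widetilde\ell_n(\bbeta^*)^T \bar\bDelta \ge -\tfrac{\lambda}{2}\lonenorm{\bar\bDelta}$, the same $\ell_1$ decomposition as above, and $\lonenorm{\bar\bDelta_{\cS}} \le \sqrt{s}\, \ltwonorm{\bar\bDelta} = \sqrt{s}\, r_*$. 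This leads to $F(\bar\bDelta) \ge \tfrac{\kappa}{2} r_*^2 - \tfrac{3\lambda}{2}\sqrt{s}\, r_* = r_* ( \tfrac{\kappa}{2} r_* - 3 C \xi \sqrt{s \log d / n} )$. Choosing $C_1 > 6 C / \kappa$ makes the bracket strictly positive, so $F(\bar\bDelta) > 0$, contradicting $F(\bar\bDelta) \le 0$; hence $\ltwonorm{\hat\bDelta} \le r_*$. A union bound over the failure event of Lemma \ref{lem:2} (probability $\le 2 d^{1-\xi}$) and that of Lemma \ref{lem:3} (probability $\le 2 e^{-t} = 2 d^{1-\xi}$ for $t = (\xi - 1)\log d$) gives the claimed bound $4 d^{1 - \xi}$.

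The main obstacle I anticipate is the localization step: because LRSC is only available inside a shrinking neighborhood, one must deploy the convexity/rescaling device to pull the (a priori uncontrolled) estimation error onto the sphere where Lemma \ref{lem:3} applies, and simultaneously verify both that the cone condition survives the rescaling and that the LRSC tolerance term is genuinely negligible against $\kappa r_*^2$ under the scaling $r_* \asymp \sqrt{s \log d / n}$. Once these are in place, the remaining manipulations---the basic inequality, the $\ell_1$ triangle-inequality bookkeeping, and the final contradiction---are routine.
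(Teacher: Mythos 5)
Your proposal is correct and follows the same overall strategy as the paper's proof: cone membership $\widetilde\bbeta - \bbeta^* \in \cC(\cS)$ forced by Lemma \ref{lem:2} and the choice of $\lambda$, localization of the error onto a sphere of radius $\asymp \sqrt{s\log d/n}$ by rescaling, the LRSC of Lemma \ref{lem:3} on that sphere, the cone-based conversion $\lonenorm{\bDelta} \le 4\lonenorm{\bDelta_{\cS}} \le 4\sqrt{s}\,\ltwonorm{\bDelta}$, and a union bound giving $2d^{1-\xi} + 2d^{1-\xi} = 4d^{1-\xi}$. The one substantive difference is how the optimality of $\widetilde\bbeta$ is transferred to the rescaled point, and there your execution is tighter than the paper's. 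The paper asserts $\delta\widetilde\ell_n(\widetilde\bbeta_{\eta};\bbeta^*) \le -\nabla\widetilde\ell_n(\bbeta^*)^T\widetilde\bDelta_{\eta}$, which amounts to claiming $\widetilde\ell_n(\widetilde\bbeta_{\eta}) \le \widetilde\ell_n(\bbeta^*)$; this does not follow from optimality of $\widetilde\bbeta$ for the \emph{penalized} objective unless one carries the penalty difference $\lambda(\lonenorm{\bbeta^*} - \lonenorm{\widetilde\bbeta_{\eta}})$ through the chain (doing so only changes constants, so the paper's conclusion is unaffected). You instead work with the penalized objective $F$, use $F(\hat\bDelta)\le 0$, $F(\bzero)=0$, and convexity of $F$ to conclude $F(\bar\bDelta)\le 0$ at the rescaled point, and then account for the penalty explicitly in the lower bound $F(\bar\bDelta) \ge \tfrac{\kappa}{2}r_*^2 - \tfrac{3\lambda}{2}\sqrt{s}\,r_*$; that bookkeeping is watertight and is exactly the device of \cite{NRW12}. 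Your contradiction formulation versus the paper's ``solve the implicit inequality for the radius $r_0$'' formulation is purely cosmetic: both hinge on the scale-invariance of $\cC(\cS)$ under the rescaling and on the LRSC tolerance $C_0 r^2\bigl(\sqrt{t/n}+\sqrt{s\log d/n}\bigr)$ being negligible against $\kappa r^2$, and your probability accounting with $t = (\xi-1)\log d$ reproduces the stated $4d^{1-\xi}$ bound.
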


	\section{Numerical study}
	
		\begin{figure*}[t]
		\centering
		\vspace{-1.4cm}
		\label{fig:2}
		\begin{tabular}{cc}
			\includegraphics[scale=.35]{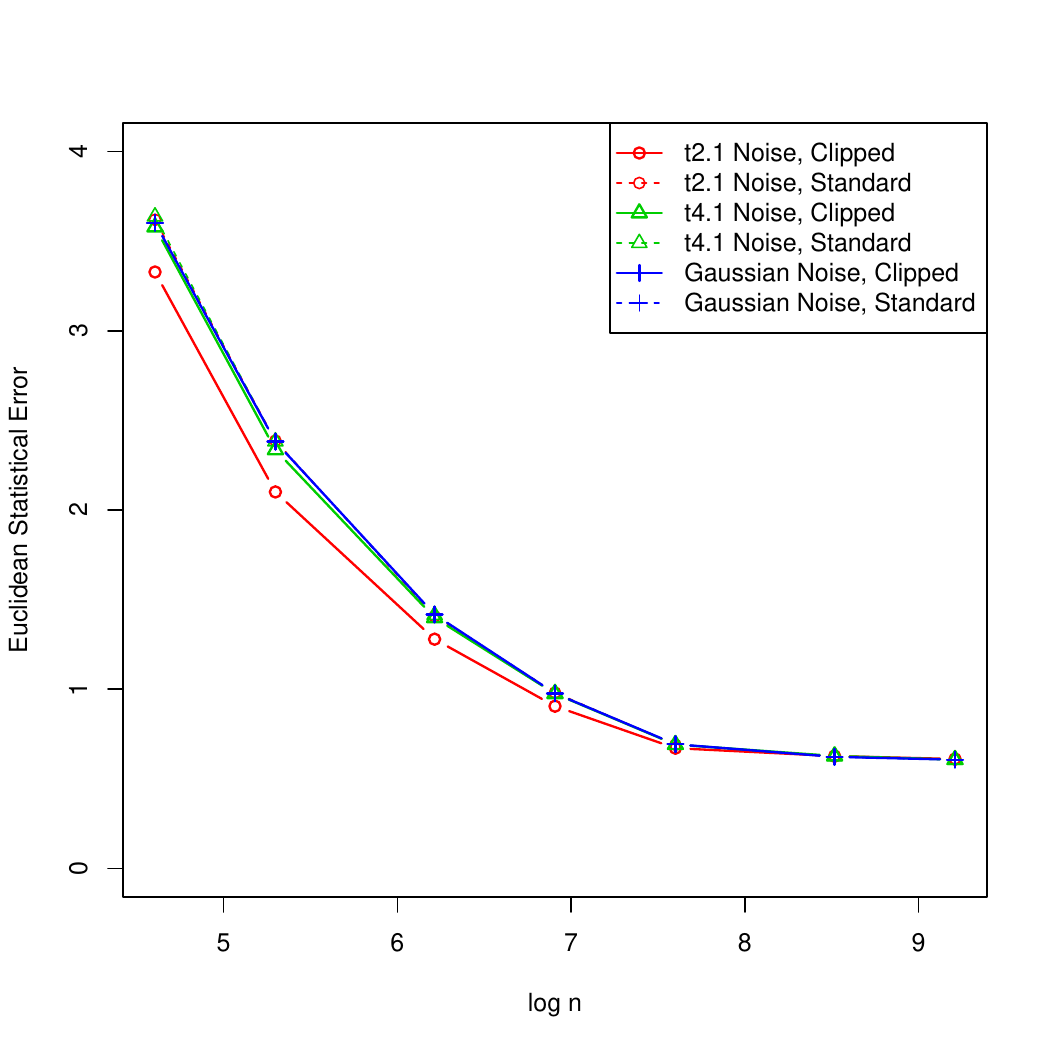} & \includegraphics[scale=.35]{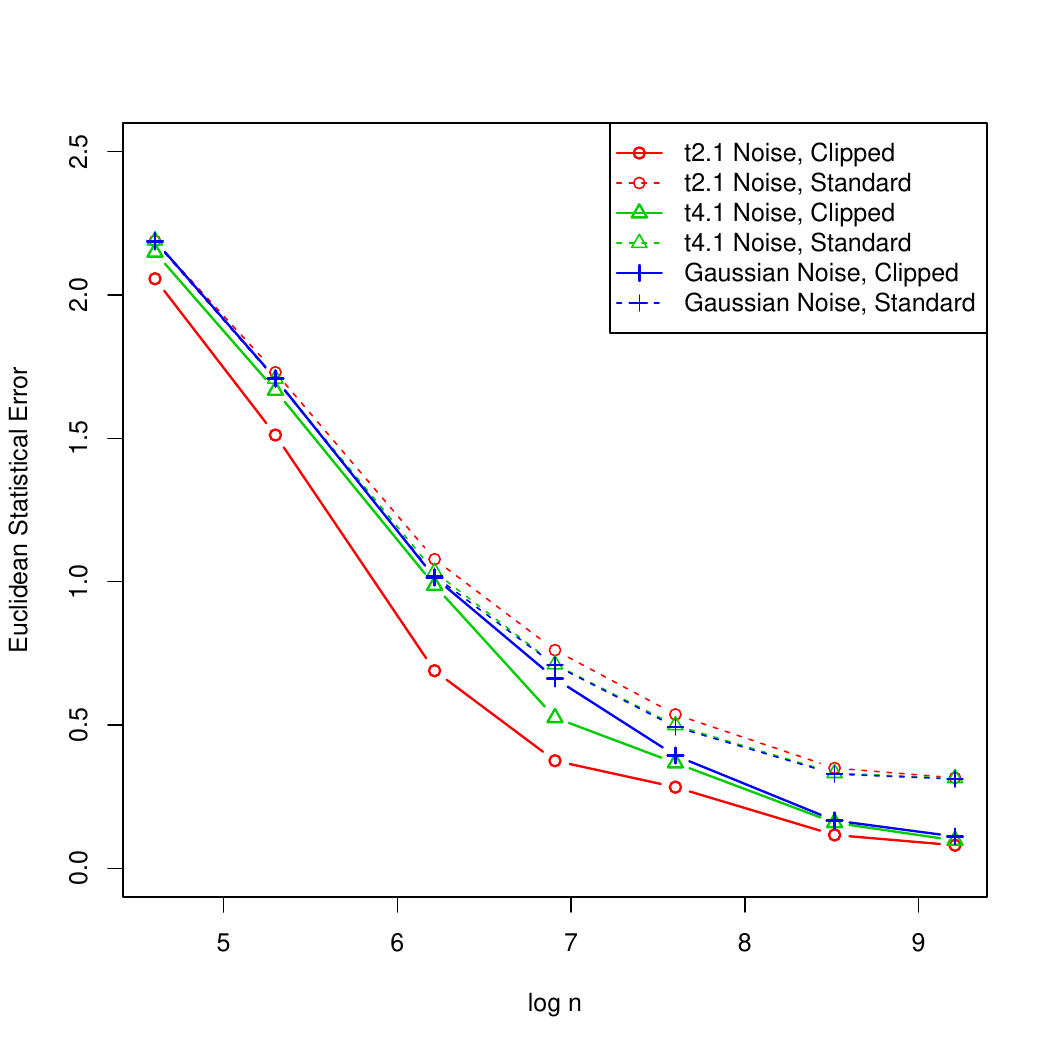} \\
			\quad Standard Gaussian features & \quad $t_{4.1}$ features
		\end{tabular}
		\caption{High dimensional sparse linear regression with light-tailed features (left) and heavy-tailed features (right)}
		\vspace{-.4cm}
	\end{figure*}

	\subsection{High-dimensional sparse linear regression}
	
	We first consider the high-dimensional sparse linear model $y_i = \bx_i^\top\bbeta^* + \epsilon_i$. We set $d=1000$, $n = 100, 200, 500, 1000, 5000, 10000$ and $\bbeta^* = (1, 1, 1, 1, 1, 0, \ldots, 0)^\top$. Recall that in the high-dimensional regime, we propose elementwise shrinkage on the heavy-tailed features and clip the responses. In Figure 2, we compare estimation error of the $\ell_1$-regularized least squares estimators based on the shrunk data and original data under standard Gaussian features and $t_{4.1}$ features respectively. All feature vectors $\{\bx_i\}_{i=1}^n$ are i.i.d., and within each $\bx_i$, $\{x_{ij}\}_{j=1}^d$ are i.i.d. $\{\epsilon_i\}_{i=1}^n$ are i.i.d. noises that are independent of the features and we adjust the magnitude of the noise such that $\mathrm{SD}(\epsilon_i) =5$ regardless of its distribution. $\tau_1, \tau_2$ and  $\lambda$ are selected by cross-validation. The plot is based on $1,000$ independent Monte Carlo simulations. From Figure 2, we first observe that under both light-tailed and heavy-tailed features, the heavier tail $\epsilon_i$ has, the more the data shrinkage approach improves the statistical accuracy. More importantly, the benefit from data shrinkage is much more significant in the presence of heavy-tailed features, which justifies our theory. 
	
\begin{figure*}[t]
		\centering
		\vspace{-.5cm}
		\begin{tabular}{cc}
			\includegraphics[ scale=.34]{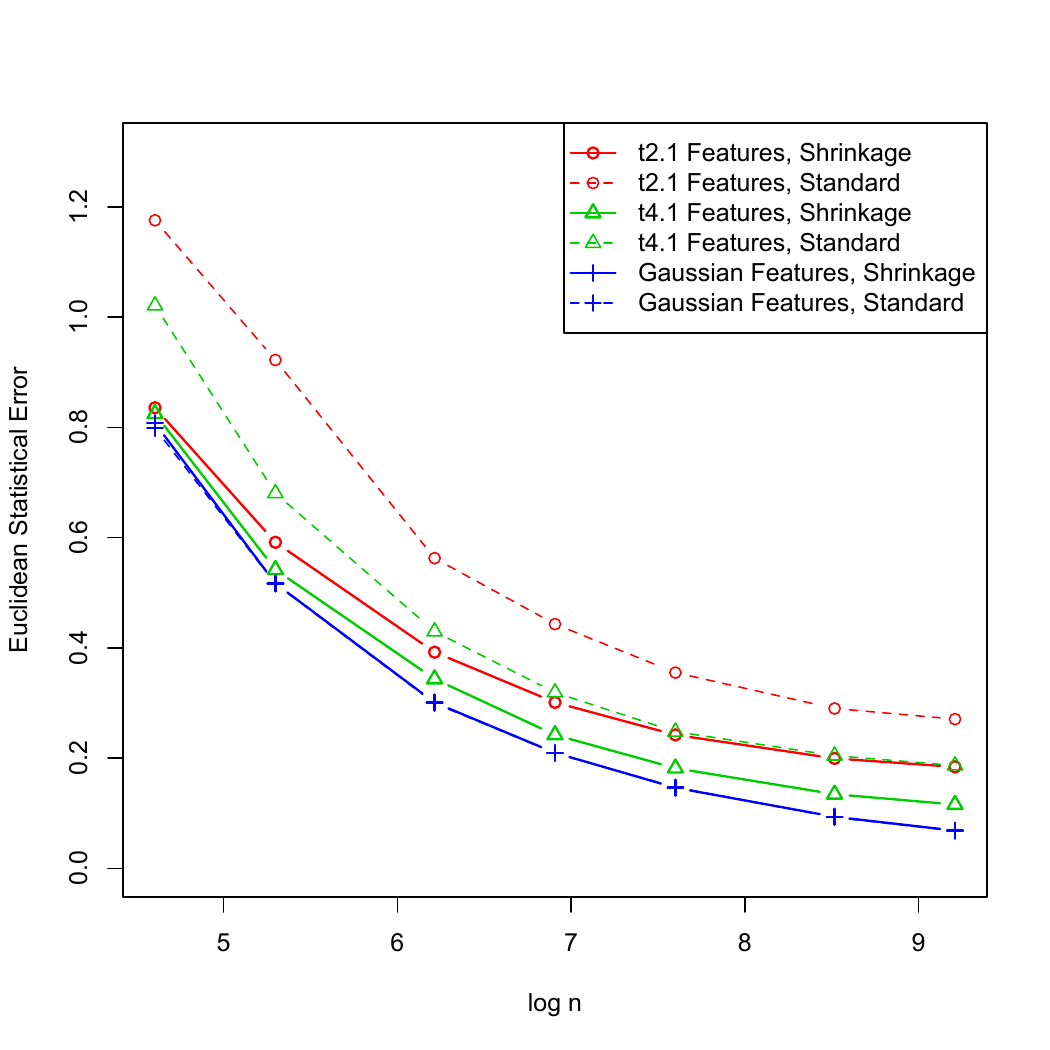} & \includegraphics[scale=.34]{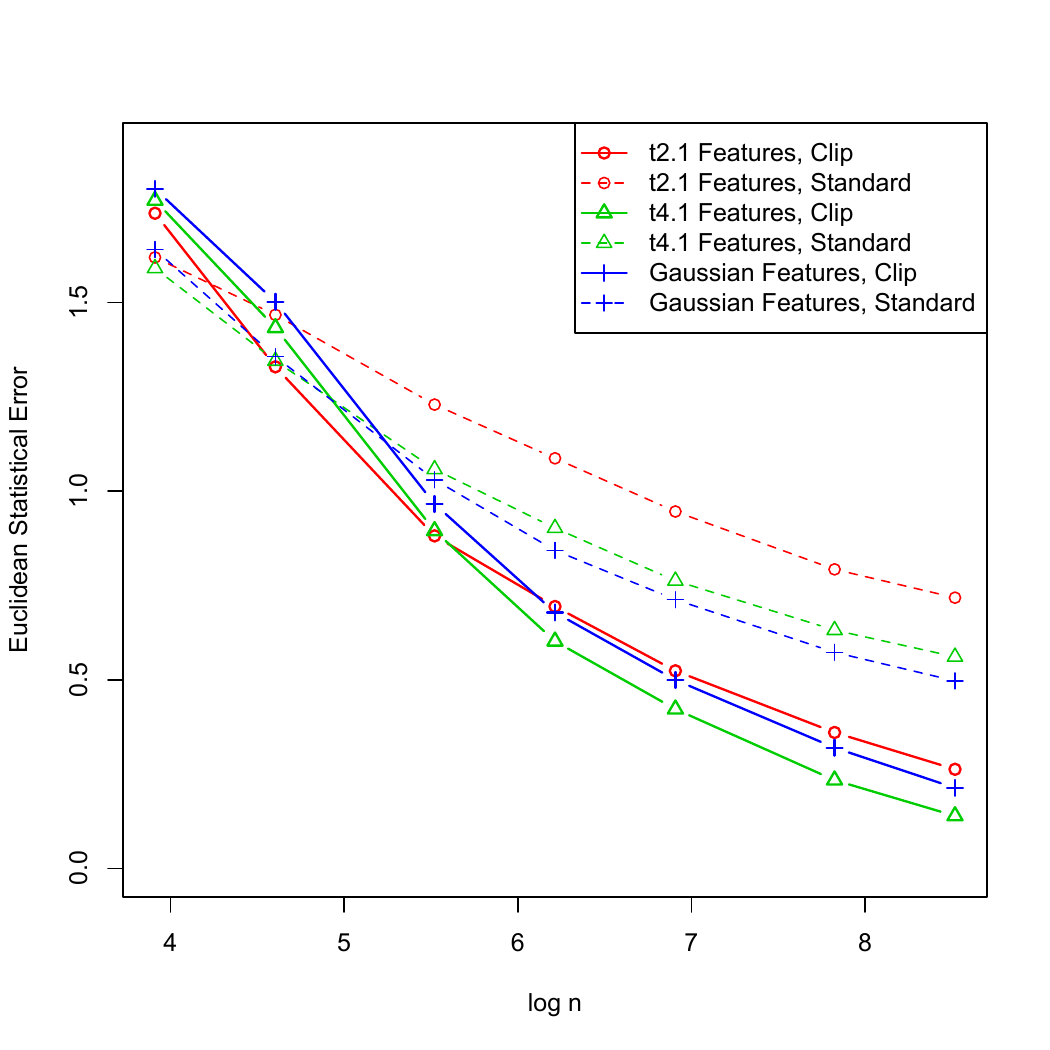} \\
			\quad Low dimensions & \quad High dimensions
		\end{tabular}
		\vspace{-.1cm}	
		\caption{Statistical error of the MLEs based on minimizing $\widetilde\ell^w_n(\bbeta)$ with $10\%$ mislabeled data}
		\vspace{-.5cm}
		\label{fig:3}
	\end{figure*}

	\subsection{Logistic regression with mislabeled data}
	
	In this subsection, we consider the logistic regression with mislabeled data as characterized by \eqref{eq:3.2}. We minimize the weighted negative log-likelihood to derive $\widehat\bbeta^w$ and $\widetilde\bbeta^w$ as described in \eqref{eq:3.3} and \eqref{eq:3.4} to estimate the regression vector $\bbeta^*$ and compare their performance. The tuning parameters $\lambda$ and $\tau_1$ are chosen based on cross-validation. We investigate both the low-dimensional and high-dimensional regimes. 

	In the low-dimensional regime, let $d=10$, $n$ range from $10 ^ 2$ to $10 ^ 4$, $\bbeta^* = (0.5{\bf 1}^{\top}_5, -0.5{\bf 1}^{\top}_5)^{\top}$ and $p=0.1$. The left panel of Figure \ref{fig:3} compares $\ltwonorm{\widehat\bbeta^w - \bbeta^*}$ and $\ltwonorm{\widetilde\bbeta^w - \bbeta^*}$ under $t_{2.1}, t_{4.1}$ and Gaussian features. We can observe that $\widetilde\bbeta^w$ significantly outperforms $\widehat\bbeta^w$ under $t_{2.1}$ and $t_{4.1}$ features, and they perform equally well when features are Gaussian. This perfectly validates our theory. In the high-dimensional regime, we apply elementwise shrinakge to $\bx_i$ to derive $\widetilde\bbeta^w$. Let $d=100$, $n$ range from $50$ to $5,000$, $\bbeta^* = (1, 1, -1, {0, \ldots, 0})$ and $p=0.1$. As shown in the right panel of Figure \ref{fig:3}, $\widetilde\bbeta^w$ enjoys sharper statistical accuracy than $\widehat\bbeta^w$ under all the three types of features. The outstanding performance of $\widetilde \bbeta^w$ under the Gaussian feature scenario is particularly surprising. We conjecture that feature shrinkage here downsizes $\supnorm{\nabla \widetilde \ell_n^w(\bbeta^*)}$ and thus leads to more effective regularization. 
	
	\subsection{Experiments on the MNIST dataset}
	
	Motivated by the effectiveness of feature shrinkage, we incorporate a shrinkage layer to a convolutional neural network (CNN) to robustify its classification performance on corrupted images. Figure \ref{fig:4} illustrates this new architecture, which we call a shrinkage CNN. The new shrinkage layer applies the $\ell_4-$norm shrinkage as in \eqref{eq:l4_norm_shrinkage} to the feature vector $\bx$ learned by the original CNN to guard against its heavy tail if any. Then the shrunk features are used to derive the probability of each class.  
	
	\begin{figure}[H]
		\centering
		\vspace{-.4cm}
		\includegraphics[width=.9 \textwidth]{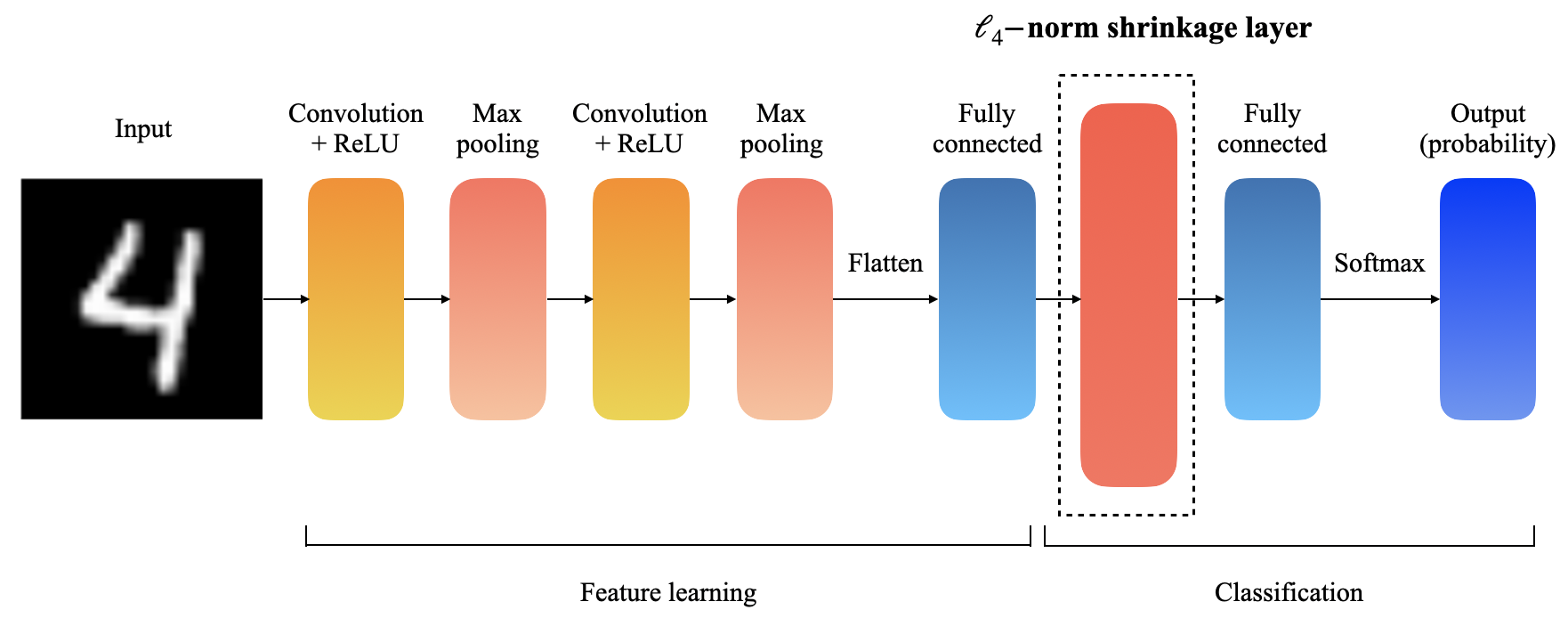} 
		\vspace{-.3cm}
		\caption{Architecture of the shrinkage CNN}
		\label{fig:4}
		\vspace{-.6cm}
	\end{figure}
	
	We classify the digits $4$'s and $9$'s in the MNIST (\citet{LeC98}) dataset when the images are randomly mislabeled with probability $0.4$ and corrupted by ``salt'' noise. We train both the original CNN and its shrinkage variant by minimizing the weighted negative log-likelihood $\widetilde \ell^w_n(\bbeta)$ in \eqref{eq:3.4}. We choose $\tau_1 = 2$ in \eqref{eq:l4_norm_shrinkage} in the $\ell_4$-norm shrinakge layer. We repeat flipping labels, adding noise and training for $100$ times independently to evaluate the average misclassification rate. The result is presented in Table \ref{tab:2}. We can see that the feature shrinkage layer reduces the testing misclassification rate by more than $30\%$ relatively in the presence of noisy pixels.
%
%
%

\begin{table}[H]
  \caption{Average testing misclassification rate (with standard error in the parentheses) on noisy MNIST images under mislabeling probability $40\%$}
  \label{tab:2}
  \centering
  \begin{tabular}{ccc}
    \toprule
    Proportion of noisy pixels & Original CNN & Shrinkage CNN \\
    \midrule
    0        & $3.64\%_{(0.20\%)}$  & $2.93\%_{(0.09\%)}$ \\ 
    0.1     & $6.88\%_{(0.22\%)}$  & $4.18\%_{(0.17\%)}$ \\
    0.2     & $6.90\%_{(0.21\%)}$  & $4.37\%_{(0.16\%)}$ \\
    0.4     & $10.69\%_{(0.29\%)}$ & $6.65\%_{(0.24\%)}$ \\
    0.6     & $18.82\%_{(0.88\%)}$ & $12.80\%_{(0.65\%)}$\\
    
    
    \bottomrule
  \end{tabular}
\end{table}

\bibliographystyle{ims}
\bibliography{RobustBib}

\newpage


\section{Technical lemmas, propositions and proofs}
	
	\begin{lem}
		\label{lem:4}
		Suppose $\EE(\bv^\top\bx_i)^4\le R$ for any $\bv\in \cS^{d-1}$. Define the $\ell_4$-norm shrunk samples 
		\[
		 	\widetilde \bx_i := \frac{\min(\lfournorm{\bx_i}, \tau)}{\lfournorm{\bx_i}} \bx_i, 
		\]
		where $\tau$ is a threshold value. Then we have the following:
		\begin{enumerate}
			\item $\opnorm{\widetilde\bx_i\widetilde\bx_i^\top- \EE \widetilde\bx_i\widetilde\bx_i^\top} \le \ltwonorm{\widetilde\bx_i}^2+ \sqrt{R} \le \sqrt{d}\tau^2+\sqrt{R};$
			\item $\opnorm{\EE((\widetilde\bx_i\widetilde\bx_i^\top- \EE\widetilde\bx_i\widetilde\bx_i^\top)^\top(\widetilde\bx_i\widetilde\bx_i^\top-\EE \widetilde\bx_i\widetilde\bx_i^\top))} \le R(d+1);$
			\item For all $\xi > 0$, $\PP\biggl \{ \opnorm{\widetilde\bSigma_n(\tau) - \bSigma} \ge \xi \biggl ( \frac{Rd \log n}{n}\biggr ) ^ {1 / 2}\biggr \}\le n^{1-C\xi}, $ where $\tau \asymp  \bigl(nR/(\log n) \bigr)^{1/4}$ and $C$ is a universal constant.
		\end{enumerate}

	\end{lem}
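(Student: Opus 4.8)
The plan is to establish the three parts in order, since the deterministic bounds (1) and (2) are precisely the per-summand operator-norm bound and the matrix variance proxy that feed a matrix Bernstein argument for (3), to which I would add a separate deterministic bias estimate. Throughout write $\widetilde\bSigma_n(\tau)=\frac1n\sum_{i=1}^n\widetilde\bx_i\widetilde\bx_i^T$ and $\bSigma=\EE\bx_i\bx_i^T$. For part (1), apply the triangle inequality to $\widetilde\bx_i\widetilde\bx_i^T-\EE\widetilde\bx_i\widetilde\bx_i^T$. The rank-one term gives $\opnorm{\widetilde\bx_i\widetilde\bx_i^T}=\ltwonorm{\widetilde\bx_i}^2$. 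Writing $\widetilde\bx_i=c_i\bx_i$ with $c_i:=\min(\lfournorm{\bx_i},\tau)/\lfournorm{\bx_i}\le 1$, every linear functional contracts, so $\EE(\bv^T\widetilde\bx_i)^2\le\EE(\bv^T\bx_i)^2\le(\EE(\bv^T\bx_i)^4)^{1/2}\le\sqrt R$ for all $\bv\in\cS^{d-1}$, whence $\opnorm{\EE\widetilde\bx_i\widetilde\bx_i^T}\le\sqrt R$. Finally $\ltwonorm{\widetilde\bx_i}^2\le\sqrt d\,\lfournorm{\widetilde\bx_i}^2\le\sqrt d\,\tau^2$ by Cauchy–Schwarz and the definition of the shrinkage.

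For part (2), set $\bA:=\widetilde\bx_i\widetilde\bx_i^T$ and $\bar\bA:=\EE\bA$; symmetry gives $\EE[(\bA-\bar\bA)^T(\bA-\bar\bA)]=\EE\bA^2-\bar\bA^2$, so $\opnorm{\EE\bA^2-\bar\bA^2}\le\opnorm{\EE\bA^2}+\opnorm{\bar\bA}^2$, and the second piece is at most $R$ by part (1). Since $\bA^2=\ltwonorm{\widetilde\bx_i}^2\,\widetilde\bx_i\widetilde\bx_i^T$ and $\EE\bA^2$ is positive semidefinite, $\opnorm{\EE\bA^2}=\sup_{\bv\in\cS^{d-1}}\EE[\ltwonorm{\widetilde\bx_i}^2(\bv^T\widetilde\bx_i)^2]=\sup_\bv\sum_{k}\EE[\widetilde x_{ik}^2(\bv^T\widetilde\bx_i)^2]$, and Cauchy–Schwarz together with $\EE\widetilde x_{ik}^4\le\EE x_{ik}^4\le R$ (take $\bv=\be_k$) and $\EE(\bv^T\widetilde\bx_i)^4\le R$ bounds each summand by $R$, yielding $dR$. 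Adding the two pieces gives $R(d+1)$.

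For part (3), the key is the decomposition $\widetilde\bSigma_n(\tau)-\bSigma=(\widetilde\bSigma_n(\tau)-\EE\widetilde\bx_i\widetilde\bx_i^T)+(\EE\widetilde\bx_i\widetilde\bx_i^T-\bSigma)$ into a fluctuation term and a bias term. Because $\widetilde\bx_i\widetilde\bx_i^T=c_i^2\bx_i\bx_i^T$ with $c_i^2=1$ unless $\lfournorm{\bx_i}>\tau$, the bias obeys $\opnorm{\EE\widetilde\bx_i\widetilde\bx_i^T-\bSigma}\le\sup_\bv\EE[(\bv^T\bx_i)^2\,1_{\{\lfournorm{\bx_i}>\tau\}}]$, and Cauchy–Schwarz with Markov gives $\le\sqrt R\,\sqrt{\PP(\lfournorm{\bx_i}>\tau)}\le\sqrt R\,\sqrt{dR/\tau^4}\asymp\sqrt{Rd\log n/n}$, using $\EE\lfournorm{\bx_i}^4=\sum_j\EE x_{ij}^4\le dR$ and $\tau^4\asymp nR/\log n$. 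For the fluctuation I would apply the matrix Bernstein inequality to the i.i.d.\ centered matrices $\bM_i:=\widetilde\bx_i\widetilde\bx_i^T-\EE\widetilde\bx_i\widetilde\bx_i^T$, with almost-sure bound $\opnorm{\bM_i}\le\sqrt d\,\tau^2+\sqrt R=:B$ from part (1) and variance proxy $\opnorm{\sum_i\EE\bM_i^2}=n\opnorm{\EE\bM_1^2}\le nR(d+1)=:\sigma^2$ from part (2). Choosing deviation level $s\asymp\xi\sqrt{Rd\log n/n}$ and $t=ns$, one checks that $\tau^4\asymp nR/\log n$ forces $Bt\asymp\sigma^2\asymp nRd$, so the Bernstein exponent is $t^2/(2(\sigma^2+Bt/3))\asymp\xi^2\log n/(1+\xi)\gtrsim\xi\log n$ for $\xi\ge 1$; combined with the dimensional prefactor (absorbed since $d\lesssim n$ in this regime) this gives the $n^{1-C\xi}$ tail, and adding the bias merely shifts the effective $\xi$ by a constant.

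The main obstacle is the balancing in part (3). The threshold $\tau\asymp(nR/\log n)^{1/4}$ has to be tuned so that the bias $\sqrt{Rd\log n/n}$ matches the Bernstein fluctuation while keeping the per-summand norm $B\asymp\sqrt{dnR/\log n}$ small enough that the linear term $Bt$ does not overwhelm the variance term $\sigma^2$; verifying that at this choice the exponent scales linearly in $\xi\log n$ (rather than being dominated by the $Bt/3$ contribution, which would degrade the rate) is the crux of the calculation, and it is exactly here that the fourth-moment bound $R$ enters through both $\sigma^2$ and the tail probability $\PP(\lfournorm{\bx_i}>\tau)$.
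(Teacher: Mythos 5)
Your proposal is correct and follows essentially the same route as the paper: triangle inequality plus the fourth-moment bound for part (1), Cauchy--Schwarz on $\EE[\ltwonorm{\widetilde\bx_i}^2(\bv^T\widetilde\bx_i)^2]$ for part (2), and for part (3) a bias--fluctuation decomposition with the bias controlled by Cauchy--Schwarz/Markov via $\PP(\lfournorm{\bx_i}>\tau)\le dR/\tau^4$ and the fluctuation controlled by matrix Bernstein using (1) and (2) as the range and variance inputs. Your explicit check that the choice $\tau\asymp(nR/\log n)^{1/4}$ balances the two Bernstein regimes (giving an exponent $\asymp\xi\log n$) is the step the paper leaves implicit, so nothing is missing.
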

	
	\begin{proof}
		This result is from \cite{FWZ17}. For convenience of adapting the lemma to other settings, we present its proof here. Notice that 
		\beq
			\label{eq:5.1}
			\opnorm{\widetilde\bx_i\widetilde\bx_i^\top- \EE\widetilde\bx_i\widetilde\bx_i^\top}\le \opnorm{\widetilde \bx_i \widetilde\bx_i^\top}+\opnorm{\EE \widetilde\bx_i\widetilde\bx_i^\top}=\ltwonorm{\widetilde\bx_i}^2+ \sqrt{R} \le \sqrt{d}\tau^2+\sqrt{R}. 
		\eeq
		Also for any $\bv\in \cS^{d-1}$, we have
		\[	
			\begin{aligned}
				\EE(\bv^\top \widetilde\bx_i\widetilde \bx_i^\top\widetilde\bx_i\widetilde\bx_i^\top \bv) & =\EE (\ltwonorm{\widetilde\bx_i}^2 (\bv^\top\widetilde\bx_i)^2)\le \EE  (\ltwonorm{\bx_i}^2 (\bv^\top\bx_i)^2) \\
			 	& = \sum\limits_{j=1}^d \EE (x_{ij}^2(\bv^\top \bx_i)^2) \le \sum\limits_{j=1}^d \sqrt{\EE (x_{ij}^4) \EE(\bv^\top \bx_i)^4 } \le Rd
			\end{aligned}
		\]
		Then it follows that $\opnorm{\EE\widetilde\bx_i\widetilde\bx_i^\top\widetilde\bx_i\widetilde\bx_i^\top}\le Rd$. Since $\opnorm{(\EE \widetilde \bx_i\widetilde \bx_i^\top)^\top\EE (\widetilde\bx_i\widetilde\bx_i^\top)}\le R$, 
		\beq
			\label{eq:5.3}
			\opnorm{\EE((\widetilde\bx_i\widetilde\bx_i^\top- \EE\widetilde\bx_i\widetilde\bx_i^\top)^\top(\widetilde\bx_i\widetilde\bx_i^\top-\EE\widetilde\bx_i\widetilde\bx_i^\top))} \le R(d+1).
		\eeq
		By the matrix Bernstein's inequality (Theorem 5.29 in \cite{Ver10}), we have for some constant $c_1$,
		\beq
			\label{eq:5.2}
			\PP \Big( \opnorm{\frac{1}{n}\sum\limits_{i=1}^n \widetilde\bx_i\widetilde\bx_i^\top-\EE \widetilde\bx_i\widetilde\bx_i^\top}>t \Big)\le 2d\exp\Bigl(-c_1\bigl(\frac{nt^2}{R(d+1)} \wedge \frac{nt}{\sqrt{d}\tau^2+\sqrt{R}}\bigr)\Bigr).	
		\eeq
		For any $\bv\in \cS^{d-1}$, it holds that
		\beq
			\label{eq:5.4}
			\begin{aligned}
				\EE (\bv^\top(\bx_i\bx_i^\top)\bv 1_{\{\lfournorm{\bx_i} \ge \tau\}}) & \le  \sqrt{\E (\bv^\top\bx_i)^4 P(\|\bx_i\|_4>\tau)}  \le \biggl ( \frac{R^2d}{\tau^4}\biggr ) ^ {1 / 2}=\frac{R\sqrt{d}}{\tau^2}.
			\end{aligned}
		\eeq
		Therefore we have 
		\beq
			\label{eq:5.5}
			\opnorm{\EE (\bx_i\bx_i^\top - \widetilde\bx_i\widetilde\bx_i^\top)} \le R\sqrt{d}/\tau^2.
		\eeq
		
		Choose $\tau \asymp (nR/ \log d)^{1/4}$ and substitute $t$ with $\xi\sqrt{Rd\log n/n}$. Then we reach the final conclusion by combining the concentration bound and bias bound.  

	\end{proof}

	\begin{proof}[Proof of Lemma \ref{lem:1}]
	
		Define a contraction function 
		\[
			\phi(x; \theta) = x^2  \ind_{\{ |x| \le \theta\}} + (x - 2\theta)^2 \ind_{\{ \theta < x \le 2\theta\}} + (x + 2\theta)^2 \ind_{\{ -2\theta \le x < -\theta\}}. 
		\]
		One can verify that $\phi(x; \theta) \le x^2$ for any $\theta$. This contraction function was used in a preliminary version of \cite{NRW12} to establish the RSC of negative log-likelihood. Given any $\bDelta \in \cB_2(\bzero, r)$, by the Taylor expansion, we can find $v \in (0, 1)$ such that
		\beq
			\label{eq:5.6}
			\begin{aligned}
				& \delta\widetilde \ell_n(\bbeta^* + \bDelta; \bbeta^*)  = \widetilde \ell_n(\bbeta^* + \bDelta) - \widetilde \ell_n(\bbeta^*) - \nabla\widetilde\ell_n(\bbeta^*)^\top\bDelta = \frac{1}{2} \bDelta^\top \widetilde\bH_n(\bbeta^* + v\bDelta)\bDelta \\
				& = \frac{1}{2n}\sum\limits_{i=1}^n b''(\widetilde \bx_i^\top (\bbeta^* + v\bDelta)) (\bDelta^\top \widetilde\bx_i)^2 \ge \frac{1}{2n}\sum\limits_{i=1}^n b''(\widetilde \bx_i^\top (\bbeta^* + v\bDelta)) \phi(\bDelta^\top \widetilde\bx_i; \alpha_1 r)  \ind_{\{ |{\bbeta^*}^\top \widetilde\bx_i| \le \alpha_2 \} } \\
				& \ge \frac{m(\omega)}{2n}\sum\limits_{i=1}^n \phi(\bDelta^\top \widetilde\bx_i; \alpha_1 r) \ind_{\{ |{\bbeta^*}^\top \widetilde\bx_i| \le \alpha_2 \} }, 
			\end{aligned}
		\eeq
		where we choose $\omega = \alpha_1 + \alpha_2 > \alpha_1 r + \alpha_2$ so that the last inequality holds by Condition (1) . For ease of notation, let $\cA_i := \{|\bDelta^\top\widetilde \bx_i| \le \alpha_1 r\}$ and $\cB_i := \{|{\bbeta^*}^\top\widetilde \bx_i| \le \alpha_2\}$. We have
		\[
			\begin{aligned}
				\EE [\phi(\bDelta^\top \widetilde\bx_i; \alpha_1 r)  \ind_{\cB_i}]& \ge \EE [(\bDelta^\top \widetilde \bx_i)^2  \ind_{\cA_i \cap \cB_i}] \\
			 & \ge \bDelta^\top \EE [\bx_i\bx_i^\top  \ind_{\cA_i \cap \cB_i}] \bDelta  - \bDelta^\top\EE [(\bx_i\bx_i^\top - \widetilde\bx_i\widetilde\bx_i^\top) \ind_{\cA_i \cap \cB_i}] \bDelta \\
			& \ge \bDelta^\top \EE [\bx_i\bx_i^\top \ind_{\cA_i \cap \cB_i}] \bDelta  - \bDelta^\top\EE [\bx_i\bx_i^\top - \widetilde\bx_i\widetilde\bx_i^\top] \bDelta \\
			& \ge \bDelta^\top \EE (\bx_i\bx_i^\top)\bDelta - \bDelta^\top\EE (\bx_i\bx_i^\top \ind_{\cA^c_i \cup \cB^c_i})\bDelta - \bDelta^\top\EE [\bx_i\bx_i^\top - \widetilde\bx_i\widetilde\bx_i^\top] \bDelta \\
			& \ge \kappa_0 \ltwonorm{\bDelta}^2 - \sqrt{\EE (\bDelta^\top \bx_i) ^4 (\PP(\cA_i^c) + \PP(\cB_i^c))} - \bDelta^\top\EE [\bx_i\bx_i^\top - \widetilde\bx_i\widetilde\bx_i^\top] \bDelta. 
			\end{aligned}
		\]
		By the Markov Inequality, 
		\[
			\PP(\cA^c_i) \le \frac{\EE (\bDelta^\top \widetilde \bx_i)^4}{\alpha^4_1 r^4} \le \frac{R}{\alpha^4_1}\quad\text{and}\quad \PP(\cB^c_i) \le \frac{\EE ({\bbeta^*}^\top \widetilde \bx_i)^4}{\alpha^4_2} \le \frac{R\ltwonorm{\bbeta^*}^4}{\alpha^4_2} \le \frac{RL^4}{\alpha^4_2}. 
		\]
		Besides, according to \eqref{eq:5.5}, 
		\[
			\bDelta^\top\EE [\bx_i\bx_i^\top - \widetilde\bx_i\widetilde\bx_i^\top] \bDelta \le \frac{R\sqrt{d}\ltwonorm{\bDelta}^2}{\tau_1^2} \le C_1R\ltwonorm{\bDelta}^2 \biggl ( \frac{d\log d}{n}\biggr ) ^ {1 / 2}, 
		\]
		where $C_1$ is certain constant. Therefore, for sufficiently large $\alpha_1, \alpha_2, n$ and $d$,  
		\beq
			\label{eq:5.7}
			\EE [\phi(\bDelta^\top \widetilde\bx_i; \alpha_1 r) \ind_{\cB_i}] \ge \frac{\kappa_0}{2} \ltwonorm{\bDelta}^2. 
		\eeq
		For notational convenience, define $Z_i := \phi(\bDelta^\top \widetilde\bx_i; \alpha_1 r)  \ind_{ \cB_i } = \phi(\bDelta^\top \widetilde \bx_i  \ind_{ \cB_i }; \alpha_1 r)$ and $\Gamma_r := \sup_{\ltwonorm{\bDelta} \le r} \bigl | n^{-1} \sum\limits_{i=1}^n Z_i - \EE Z_i \bigr|$. Then an application of Massart's inequality (\cite{Mas00}) delivers that
		\beq
			\label{eq:5.8}
			\PP\biggl \{ |\Gamma_r - \EE \Gamma_r | \ge \alpha_1^2r^2\biggl ( \frac{t}{n} \biggr ) ^ {1 / 2}\biggr \} \le 2\exp\bigl( - \frac{t}{8}\bigr). 
		\eeq
		The remaining job is to derive the order of $\EE \Gamma_r$. Note that $|\phi(x_1; \theta) - \phi(x_2; \theta)| \le 2\theta |x_1 - x_2|$ for any $x_1, x_2 \in \RR$. By the symmetrization argument and then Ledoux-Talagrand contraction inequality (see \cite{LTa13}, p. $112$), for a sequence of i.i.d. Rademacher variables $\{\gamma_i\}_{i=1}^n$, 
		\[
			\begin{aligned}
			\EE \Gamma_r & \le 2\EE \sup_{\ltwonorm{\bDelta} \le r} \bigl | \frac{1}{n} \sum\limits_{i=1}^n \gamma_iZ_i \bigr | \le 8\alpha_1 r  \EE \sup_{\ltwonorm{\bDelta} \le r} \bigl| \inn{\frac{1}{n}\sum\limits_{i=1}^n \gamma_i \widetilde \bx_i  \ind_{\{|{\bbeta^*}^\top \widetilde\bx_i| \le \alpha_2\}}, \bDelta}  \bigr| \\
			& \le 8\alpha_1 r^2 \EE \ltwonorm{ \frac{1}{n}\sum\limits_{i=1}^n \gamma_i \widetilde \bx_i  \ind_{\{|{\bbeta^*}^\top \widetilde\bx_i| \le \alpha_2\} }} \le 8\alpha_1 r^2  \biggl ( \EE \ltwonorm{ \frac{1}{n}\sum\limits_{i=1}^n \gamma_i \widetilde \bx_i \ind_{\{|{\bbeta^*}^\top \widetilde\bx_i| \le \alpha_2\}} } ^2\biggr ) ^ {1 / 2} \\
			& \le  8\alpha_1 r^2  \biggl ( \frac{1}{n^2} \sum\limits_{i=1}^n \EE \ltwonorm{\widetilde \bx_i}^2 \biggr ) ^ {1 / 2} \le  8\alpha_1r^2R^{1/4} \biggl ( \frac{d}{n}\biggr ) ^ {1 / 2}. 
			\end{aligned}
		\]
		Combining the above inequality with \eqref{eq:5.6}, \eqref{eq:5.7} and \eqref{eq:5.8} yields that for any $t > 0$, with probability at least $1 - 2\exp(-t)$, for all $\bDelta \in \RR^d$ such that $\ltwonorm{\bDelta} \le r$, 
		\[
			\delta \widetilde\ell_n(\bbeta; \bbeta^*) \ge \frac{m\kappa_0}{4}\ltwonorm{\bDelta}^2 - \alpha_1^2\biggl ( \frac{8t}{n}\biggr ) ^ {1 / 2}  r^2 - 8\alpha_1R^{1/4} \biggl ( \frac{d}{n}\biggr ) ^ {1 / 2}  r^2 . 
		\]
	\end{proof}
	
	\begin{proof}[Proof of Theorem \ref{thm:1}]

Construct an intermediate estimator $\widetilde\bbeta_{\eta} $ between $\widetilde\bbeta$ and $\bbeta^*$:
		\[
			\widetilde \bbeta_{\eta} = \bbeta^* + \eta (\widetilde \bbeta - \bbeta^*), 
		\]
		where $\eta = 1$ if $\ltwonorm{\widetilde \bbeta - \bbeta^*} \le r$ and $\eta  = r/ \ltwonorm{\widetilde\bbeta - \bbeta^*}$ if $\ltwonorm{\widetilde\bbeta - \bbeta^*} > r$. Write $\widetilde\bbeta_{\eta} - \bbeta^*$ as $\widetilde \bDelta_{\eta}$. By Lemma \ref{lem:1}, it holds with probability at least $1 - 2\exp(-t)$ that 
	\[
		\kappa \ltwonorm{\widetilde\bDelta_{\eta}}^2 - C r^2\biggl \{ \biggl ( \frac{t}{n}\biggr ) ^ {1 / 2} + \biggl ( \frac{d}{n} \biggr ) ^ {1 / 2}\biggr \} \le \delta\widetilde\ell_n(\widetilde\bbeta_{\eta}; \bbeta^*) \le - \nabla\widetilde\ell_n(\bbeta^*)^\top\widetilde\bDelta_{\eta} \le \ltwonorm{\nabla \widetilde \ell_n(\bbeta^*)} \ltwonorm{\widetilde \bDelta_{\eta}},
	\]
	which further implies that 
	\beq
		\label{eq:5.9}
		\begin{aligned}
			\ltwonorm{\widetilde \bDelta_{\eta}} & \le \frac{ 3\ltwonorm{\nabla\widetilde\ell_n(\bbeta^*)}}{\kappa} + \biggl ( \frac{3c_1 r^2}{\kappa}\biggr ) ^ {1 / 2}  \Bigl( \frac{ t}{n}\Bigr)^{1/4} + \biggl (\frac{3 c_2r^2}{\kappa}\biggr ) ^ {1 / 2} \Bigl( \frac{d}{n}\Bigr)^{1/4}. 
		\end{aligned}
	\eeq
	Now we derive the rate of $\ltwonorm{\nabla \widetilde\ell_n(\bbeta^*)}$.
			\beq
				\begin{aligned}				
				 \nabla \widetilde\ell_n(\bbeta^*) & = \frac{1}{n}\sum\limits_{i=1}^n (\widetilde z_i - b'(\widetilde\bx_i^\top\bbeta^*))\widetilde \bx_{i}  \\
				 & = \underbrace{\frac{1}{n}\sum\limits_{i=1}^n \widetilde z_i \widetilde \bx_i - \EE \widetilde z_i \widetilde\bx_i}_{T_1} + \underbrace{\EE (\widetilde z_i - b'(\widetilde\bx_i^\top\bbeta^*))\widetilde\bx_i }_{T_2} + \underbrace{\frac{1}{n}\sum\limits_{i=1}^n b'(\widetilde\bx_i^\top\bbeta^*)\widetilde\bx_i - \EE (b'(\widetilde\bx_i^\top\bbeta^*)\widetilde\bx_i)}_{T_3}. 
				\end{aligned}
			\eeq
			where $\overline\bx_i$ is between $\bx_i$ and $\widetilde\bx_i$ by the mean value theorem. In the following we will bound $T_1, T_2$ and $T_3$ respectively.
			
			\vspace{.1cm}
			\underline{\emph{Bound for $T_1$}}:\quad Define the Hermitian dilation matrix 
			\[
				\widetilde \bZ_i := \widetilde z_i  \left( \begin{array}{cc}
					0 & \widetilde\bx_i^\top \\
					\widetilde \bx_i & \bzero
				\end{array}\right)
			\]
			Note that 
			\[
				\begin{aligned}
				\opnorm{\EE \widetilde\bZ^{2}_i} & = \opnorm{\EE \Bigl [{\widetilde z_i}^{2} \left( \begin{array}{cc}
					\widetilde\bx_i^\top\widetilde\bx_i & \bzero^\top \\
					\bzero & \widetilde\bx_i \widetilde\bx_i^\top 
				\end{array}\right) \Bigr]} = \max (\EE (\widetilde z_i^2\widetilde\bx_i^\top\widetilde \bx_i), \opnorm{ \EE (\widetilde z_i^2 \widetilde \bx_i\widetilde\bx_i^\top)})
				\end{aligned}
			\]
			For any $j \in [d]$, 
			\[
				\begin{aligned}
					\E \bigl(\widetilde z_i^2  \widetilde x^2_{ij} \bigr) & \le \sqrt{\E z_i^{4}  \E x^4_{ij}} \le \sqrt{M_1R},
				\end{aligned}
			\]
			so $\EE [\widetilde z_i^2  \widetilde\bx_i^\top\widetilde\bx_i] \le d\sqrt{M_1R}$. In addition, for any $\bv\in \RR^d$ such that $\ltwonorm{\bv} =1$, 
			\[
				\E \bigl(\widetilde z_i^2 (\bv^\top\widetilde\bx_i)^2\bigr) \le \sqrt{M_1R}. 
			\]
			We thus have $\opnorm{\EE \widetilde \bZ_i^2} \le d\sqrt{M_1R}$. 
			In addition, $\opnorm{\EE \widetilde \bZ_i} = \EE ( \widetilde z_i \ltwonorm{\widetilde\bx_i}) \le \sqrt{\EE z_i^2 \EE \ltwonorm{\bx_i}^2} \le \sqrt{d}(M_1 R)^{1/4}$, which further implies that $\opnorm{\EE (\widetilde \bZ_i - \EE \widetilde\bZ_i)^{2}} \le (d+1)\sqrt{M_1R}$. 
			Also notice that since $\lfournorm{\widetilde \bx_i} \le \tau_1$ and $\widetilde z_i \le \tau_2$, $\opnorm{\widetilde \bZ_i} \le \frac{1}{2}d^{1/4}  \tau_1 \tau_2.$ By the matrix Bernstein's inequality, 
			\[
				P\bigl( \opnorm{\frac{1}{n} \sum\limits_{i=1}^n \widetilde \bZ_i - \EE \widetilde \bZ_i} \ge t \bigr) \le d  \exp\Bigl( -c_1\min\bigl(\frac{nt^2}{(d+1)\sqrt{M_1R}}, \frac{2nt}{d^{1/4} \tau_1\tau_2}\bigr) \Bigr). 
			\]
			
			Given that $\ltwonorm{T_1} = 2\opnorm{ n^{-1} \sum\limits_{i=1}^n \widetilde\bZ_i - \EE \widetilde \bZ_i}$, it thus holds that 
			\beq
				\label{eq:5.11}
				\PP\Bigl( \ltwonorm{T_1} \ge 2t \Bigr) \le d \exp\Bigl( -c_1\min\bigl(\frac{nt^2}{(d+1)\sqrt{M_1R}}, \frac{2nt}{d^{1/4} \tau_1\tau_2}\bigr) \Bigr). 
			\eeq

			\underline{\emph{Bound for $T_2$}}: \quad We decompose $T_2$ as follows:
			\[
				\begin{aligned}
				\ltwonorm{T_2} & \le \underbrace{\ltwonorm{\EE (\widetilde z_i - z_i)\widetilde \bx_i}}_{T_{21}} + \underbrace{\ltwonorm{\EE (z_i - y_i)\widetilde \bx_i}}_{T_{22}} + \underbrace{\ltwonorm{\EE (y_i - b'(\bx_i^\top\bbeta^*))\widetilde\bx_i}}_{T_{23}} \\
				& + \underbrace{\ltwonorm{\EE (b'(\bx_i^\top\bbeta^*) - b'(\widetilde \bx_i^\top\bbeta^*))\widetilde\bx_i}}_{T_{24}}. 
				\end{aligned}
			\]	 
			
			Now we work on $\{T_{2i}\}_{i=1}^4$ one by one. For any $\bv\in \RR^d$ such that $\ltwonorm{\bv} = 1$, 
			\[
				\begin{aligned}
				|\EE (\widetilde z_i - z_i)(\bv^\top\widetilde\bx_i)| & \le \EE( |z_i|(\bv^\top\bx_i)1_{\{|z_i| > \tau_2\}}) \le \sqrt{\EE (z_i^2(\bv^\top\bx_i)^2)  \PP(|z_i| > \tau_2)}. \\
				& \le (M_1R)^{1/4} \frac{\sqrt{M_1}}{\tau_2^2},
				\end{aligned}
			\]
			thus we have $\ltwonorm{T_{21}} \le M_1^{3/4}R^{1/4}/\tau_2^2$. Again, for any $\bv\in \RR^d$ such that $\ltwonorm{\bv} = 1$, since $\ltwonorm{\EE \epsilon_i \bx_i} \le M_2\sqrt{d/n}$, 
			\[
				\begin{aligned}
					\EE [\epsilon_i (\widetilde\bx_i^\top\bv)] & = \EE [\epsilon_i ((\widetilde\bx_i - \bx_i)^\top\bv)] + \EE [\epsilon_i (\bx_i^\top\bv)] \le \EE [\epsilon_i |\bx_i^\top\bv| 1_{\{\lfournorm{\bx_i} \ge \tau_1\}}] + M_2\biggl ( \frac{d}{n}\biggr ) ^ {1 / 2}\\
					& \le \sqrt{\EE (\epsilon_i(\bx_i^\top\bv))^2 \PP\bigl( \lfournorm{\bx_i}\ge \tau_1 \bigr)} + M_2\biggl ( \frac{d}{n}\biggr ) ^ {1 / 2} \\
					& \le (M_1R)^{1/4} \frac{\sqrt{dR}}{\tau_1^2} + M_2\biggl ( \frac{d}{n}\biggr ) ^ {1 / 2}. 
				\end{aligned}
			\]
			Therefore $\ltwonorm{T_{22}} \le (M_1R)^{1/4}\sqrt{dR}/\tau_1^2 + M_2\sqrt{d/n}$. For $T_{23}$, since $\EE [y_i - b'(\bx_i^\top\bbeta^*)|\bx_i] = 0$, $T_{23} = 0$. Finally we bound $T_{24}$. For any $\bv\in \RR^d$ such that $\ltwonorm{\bv}=1$, 
			\[
				\begin{aligned}
					\ltwonorm{T_{24}} & \le M\E (\bbeta^{*^\top}(\bx_i - \widetilde\bx_i))(\bv^\top\widetilde\bx_i) \le M \E [(\bbeta^{*^\top}\bx_i)(\bv^\top\bx_i)  1_{\{\lfournorm{\bx_i} \ge \tau_1\}}] \\
					& \le M \sqrt{\E (\bbeta^{*^\top} \bx_i)^2(\bv^\top\bx_i)^2 P\bigl(\lfournorm{\bx_i} \ge \tau_1\bigr)} \le ML\sqrt{dR} /\tau_1^2. 
				\end{aligned}
			\]
			To summarize here, we have
			\beq
				\label{eq:5.12}
				\ltwonorm{T_2} \le (M_1R)^{1/4} \bigl(\frac{\sqrt{M_1}}{\tau_2^2} + \frac{\sqrt{dR}}{\tau_1^2}\bigr) + M L\frac{\sqrt{dR}}{\tau_1^2} + M_2\biggl ( \frac{d}{n}\biggr ) ^ {1 / 2}. 
			\eeq
			
			\vspace{.2cm}
			
			\underline{\emph{Bound for $T_3$}}:
				We apply a similar proof strategy as in the bound for $T_1$. Define the following Hermitian dilation matrix: 
			\[
				\widetilde \bX_i := b'(\widetilde\bx_i^\top\bbeta^*) \left( \begin{array}{cc}
					0 & \widetilde\bx_i^\top \\
					\widetilde \bx_i & \bzero
				\end{array}\right). 
			\]
			First, 
			\[
				\opnorm{\EE \widetilde\bX^2_i} = \max(\EE (b'(\widetilde\bx_i^\top\bbeta^*)\widetilde\bx_i^\top\widetilde\bx_i) , \opnorm{\EE b'(\widetilde\bx_i^\top\bbeta^*)^2\widetilde\bx_i\widetilde\bx_i^\top}).
			\]
			Write $|b'(1)|$ as $b_1$. For any $j \in [d]$, 
			\[
				\begin{aligned}
					\EE \bigl(b'(\widetilde\bx_i^\top\bbeta^*)^2  \widetilde x^2_{ij} \bigr) & \le \EE [(b_1 + M|\widetilde\bx_i^\top\bbeta^* - 1|)^2\widetilde x^2_{ij}] \le 2\EE[ ((b_1 + M)^2 + M^2(\widetilde\bx_i^\top\bbeta^*)^2)\widetilde x^2_{ij}] \\
					& \le 2M^2 R\ltwonorm{\bbeta^*}^2 + 2(b_1 + M)^2\sqrt{R} =: V ,
				\end{aligned}
			\]
			so $\EE [b'(\widetilde\bx_i^\top\bbeta^*)^2 \widetilde\bx_i^\top\widetilde\bx_i] \le dV$. In addition, for any $\bv\in \RR^d$ such that $\ltwonorm{\bv} =1$, 
			\[
				\EE \bigl(b'(\widetilde\bx_i^\top\bbeta^*)^2 (\bv^\top\widetilde\bx_i)^2\bigr) \le \EE ((b_1 + M|\widetilde\bx_i^\top\bbeta^* - 1|)^2 (\bv^\top\widetilde\bx_i)^2) \le V. 
			\]
			We thus have $\opnorm{\EE \widetilde \bX_i^2} \le dV$. In addition, $\opnorm{\EE \widetilde \bX_i} = \EE ( b'(\widetilde\bx_i^\top\bbeta^*) \ltwonorm{\widetilde\bx_i}) \le \sqrt{\EE b'(\widetilde\bx_i^\top\bbeta^*)^2 \EE \ltwonorm{\widetilde \bx_i}^2} \le \sqrt{d}V$, which further implies that $\opnorm{\EE (\widetilde \bX_i - \EE \widetilde\bX_i)^{2}} \le (d + \sqrt{d})V$. 
			Also notice that $\opnorm{\widetilde \bX_i} \le ((b_1 + M) + M\ltwonorm{\bbeta^*} d^{1/4}\tau_1)d^{1/4}\tau_1$. By the matrix Bernstein's inequality, 
			\[
				\PP\bigl( \opnorm{\frac{1}{n} \sum\limits_{i=1}^n \widetilde \bX_i - \E \widetilde \bX_i} \ge t \bigr) \le d  \exp\Bigl( -c_1\min\bigl(\frac{nt^2}{(d+ \sqrt{d})V}, \frac{nt}{ (b_1 + M + M\ltwonorm{\bbeta^*} d^{1/4}\tau_1)d^{1/4}\tau_1}\bigr) \Bigr). 
			\]
			
			Given that $\ltwonorm{T_3} = 2\opnorm{n^{-1}\sum\limits_{i=1}^n \widetilde\bX_i - \E \widetilde \bX_i}$, it thus holds that 
			\beq
				\label{eq:5.13}
				\PP\Bigl( \ltwonorm{T_3} \ge 2t \Bigr) \le d \exp\Bigl( -c_1\min\bigl(\frac{nt^2}{(d+ \sqrt{d})V}, \frac{nt}{ (b_1 + M + M\ltwonorm{\bbeta^*} d^{1/4}\tau_1)d^{1/4}\tau_1}\bigr) \Bigr).
			\eeq

			Finally, choose $\tau_1, \tau_2 \asymp (n/ \log n)^{1/4}$. Combining \eqref{eq:5.11}, \eqref{eq:5.12} and \eqref{eq:5.13} delivers that for some constant $C_1$  any $\xi > 1$, 
			\beq
				\label{eq:5.14}
				\PP\biggl \{  \ltwonorm{\nabla \widetilde \ell_n(\bbeta^*)} \ge C_1\xi \biggl ( \frac{d\log n}{n}\biggr ) ^ {1 / 2}\biggr \} \le n ^{1- \xi}. 
			\eeq
		 	Choose $t = \xi \log n$ and let $r$ be larger than the RHS of \eqref{eq:5.9}. When $d / n$ is sufficiently small and $n$ is sufficiently large, we can obtain that 
			\[
				r \ge  C_2\xi  \biggl (  \frac{d \log n}{n}\biggr ) ^ {1 / 2} =:r_0, 
			\]
			where $C_2$ is a constant. Choose $r = r_0$. Then by \eqref{eq:5.9}, $\ltwonorm{\bDelta_{\eta}} \le r_0$ and thus $\widetilde\bDelta = \widetilde\bDelta_{\eta}$. Finally, we reach the conclusion that   
			\[
				\PP\biggl \{  (\ltwonorm{\widetilde\bDelta} \ge  C_2\xi \biggl (  \frac{d \log n}{n}\biggr ) ^ {1 / 2} \biggr \}\le n^{1 - \xi} + 2 n^{-\xi} \le 3n^{1 - \xi} . 
			\]
			
	\end{proof}
	
	\begin{proof}[Proof of Corollary \ref{cor:1}]
		The proof strategy is nearly the same as that for deriving Theorem \ref{thm:1}, so we provide a roadmap here and do not dive into great details. For ease of notation, write $ n^{-1}\sum\limits_{i=1}^n \ell^w(\widetilde\bx_i, z_i; \bbeta)$ as $\widetilde\ell^w(\bbeta)$ and denote the hessian matrix of $\widetilde\ell^w_n(\bbeta)$ by $\widetilde\bH^w_n(\bbeta)$. Since $\widetilde\bH^w_n(\bbeta) = \nabla^2\widetilde\ell_n(\bbeta)= \widetilde\bH_n(\bbeta)$, we can directly obtain the uniform strong convexity of $\widetilde\bH^w_n(\bbeta)$ from Lemma \ref{lem:1}. In addition,
		\[
			\begin{aligned}
				\nabla_{\bbeta} \widetilde \ell^w_n(\bbeta^*) & = \frac{1-p}{1-2p}  \underbrace{\frac{1}{n} \sum\limits_{i=1}^n (b'(\widetilde\bx_i^\top\bbeta^*) - z_i)\widetilde\bx_i}_{T_1} - \frac{p}{1-2p} \underbrace{\frac{1}{n} \sum\limits_{i=1}^n (b'(\widetilde\bx_i^\top\bbeta^*) - (1-z_i))\widetilde\bx_i}_{T_2} \\
				& = \frac{1-p}{1-2p}(T_1 - \EE T_1) - \frac{p}{1-2p}(T_2 - \EE T_2) + \frac{1-p}{1-2p}\EE T_1 - \frac{p}{1-2p}\EE T_2 \\
				& = \frac{1-p}{1-2p}(T_1 - \EE T_1) - \frac{p}{1-2p}(T_2 - \EE T_2) + \EE (b'(\widetilde\bx_i^\top\bbeta^*) - y_i)\widetilde\bx_i. 
			\end{aligned}
		\]
		Since $|b'(\widetilde\bx_i^\top\bbeta^*) - z_i| \le 1$ and $|b'(\widetilde\bx_i^\top\bbeta^*) - (1-z_i)| \le 1$, following the bound for $T_1$ in Theorem \ref{thm:1}, we will obtain 
		\[
			\PP\biggl \{ \ltwonorm{\frac{1-p}{1-2p}(T_1 - \EE T_1) - \frac{p}{1-2p}(T_2 - \EE T_2)} \ge c_1\xi\biggl ( \frac{d \log n}{n}\biggr ) ^ {1 / 2}\biggr \} \le n^{1-\xi},
		\]
		where $c_1>0$ depends on $R$ and $p$ and $\xi > 1$. In addition, following the bound for $T_{23}$ and $T_{24}$ in Theorem 1, we shall obtain
		\[
			\ltwonorm{\EE (b'(\widetilde\bx_i^\top\bbeta^*) - y_i)\widetilde\bx_i} \le M_2L \frac{\sqrt{dR}}{\tau_1^2} \le c_2  M_2\biggl ( \frac{dR\log n}{n}\biggr ) ^ {1 / 2}. 
		\]
		where $c_2>0$ is a constant. Therefore, for some constant $c_3$ depending on $R, p, M_2, R$, we have
		\[
			\PP\biggl \{\ltwonorm{\nabla_{\bbeta} \widetilde\ell_n^w(\bbeta^*)} \ge c_3\xi\biggl (\frac{d\log n}{n}\biggr ) ^ {1 / 2}\biggr \} \le n^{1-\xi}. 
		\]
		Combining this with the uniform strong convexity of $\widetilde \bH^w_n(\bbeta)$ delivers the final conclusion. 
	\end{proof}
	
	\begin{proof}[Proof of Lemma \ref{lem:2}]
		According to \eqref{eq:2.3}, $[\nabla_{\bbeta} \widetilde\ell(\bbeta^*)]_j= (b'(\widetilde\bx_i^\top\bbeta^*)- \widetilde z_i)\widetilde x_{ij}$. Then we have
		\[
			\begin{aligned}
    			\bigl|\frac{1}{n} \sum\limits_{i=1}^n (b'(\widetilde\bx_i^\top\bbeta^*)- \widetilde z_i)\widetilde x_{ij} \bigr| & \le \underbrace{\bigl|\frac{1}{n}\sum\limits_{i=1}^n b'(\widetilde\bx_i^\top\bbeta^*)\widetilde x_{ij}- \EE b'(\widetilde\bx_i^\top\bbeta^*)\widetilde x_{ij}\bigr|}_{T_1} + \underbrace{|\EE (b'(\widetilde\bx_i^\top\bbeta^*)- \widetilde z_i)\widetilde x_{ij}|}_{T_2} \\
			& + \underbrace{\bigl |\frac{1}{n}\sum\limits_{i=1}^n \widetilde z_i\widetilde x_{ij}- \EE \widetilde z_i\widetilde x_{ij}\bigr|}_{T_3} .
			\end{aligned}
		\]
		We start with the upper bound of $T_1$. By the Mean Value Theorem, for any $i \in [n]$, there exists $\xi_i$ between $1$ and $\widetilde x_i^\top\bbeta^*$ such that $b'(\widetilde\bx_i^\top\bbeta^*)= b'(1) + b''(\xi_i)(\widetilde\bx_i^\top\bbeta^* - 1) $. Therefore we have
		\[	
			\begin{aligned}
    			T_1 & \le \bigl| \frac{1}{n} \sum\limits_{i=1}^n b'(1)\widetilde x_{ij}- \EE (b'(1)\widetilde x_{ij}) \bigr| + \bigl|\frac{1}{n} \sum\limits_{i=1}^n b''(\xi_i)\widetilde x_{ij}(\widetilde \bx_i^\top\bbeta^* - 1)- \EE (b''(\xi_i)\widetilde x_{ij}(\widetilde \bx_i^\top\bbeta^* - 1)) \bigr| \\
			& \le  \bigl| \frac{1}{n} \sum\limits_{i=1}^n b'(1)\widetilde x_{ij}- \EE (b'(1)\widetilde x_{ij}) \bigr| + \sum\limits_{k=1}^d |\beta^*_k| \bigl |\frac{1}{n} \sum\limits_{i=1}^n b''(\xi_i)\widetilde x_{ij}\widetilde x_{ik}- \EE b''(\xi_i)\widetilde x_{ij}\widetilde x_{ik} \bigr| \\
			& + \bigl | \frac{1}{n}\sum\limits_{i=1}^n b''(\xi_i)\widetilde x_{ij} - \EE( b''(\xi_i)\widetilde x_{ij} ) \bigr |. 
			\end{aligned}
		\]
		Since $\var (\widetilde x_{ij}) \le  \sqrt{R}$ and $|\widetilde x_{ij}| \le \tau_1$, an application of Bernstein's inequality (Theorem 2.10 in \cite{BLM13}) yields that
		\[
    			\PP\biggl[ |\frac{1}{n}\sum\limits_{i=1}^n b'(1)\widetilde x_{ij}- \EE(b'(1)\widetilde x_{ij}) |\ge |b'(1)|\biggl\{ \biggl ( \frac{\sqrt{R}  2t}{n}\biggr ) ^ {1 / 2}+\frac{c_1\tau_1 t}{n}\biggr \} \biggr]\le 2\exp(-t),
		\]
		where $c_1>0$ is some universal constant. In addition, $b''(\xi_i)\widetilde x_{ij}\widetilde x_{ik} \le M \tau^2_1$ and $\var (b''(\xi_i)\widetilde x_{ij}\widetilde x_{ik}) \le \E (b''(\xi_i)\widetilde x_{ij}\widetilde x_{ik})^2 \le M^2R$. Again by Bernstein's inequality, 
		\[
			\PP\biggl\{  \bigl |\frac{1}{n} \sum\limits_{i=1}^n b''(\xi_i)\widetilde x_{ij}\widetilde x_{ik}- \EE (b''(\xi_i)\widetilde x_{ij}\widetilde x_{ik}) \bigr| \ge \biggl ( \frac{2M^2 R t}{n}\biggr ) ^ {1 / 2}+ \frac{c_1M\tau^2_1 t}{n}\biggr \} \le 2\exp(-t). 
		\]
		Similarly, 
		\[	
			\PP\biggl \{  \bigl |\frac{1}{n} \sum\limits_{i=1}^n b''(\xi_i)\widetilde x_{ij} - \EE (b''(\xi_i)\widetilde x_{ij}) \bigr | \ge \biggl ( \frac{M^2\sqrt{R}t}{n}\biggr ) ^ {1 / 2} + \frac{M\tau_1 t}{n}\biggr \} \le 2\exp(-t). 
		\]
		Combining the above three inequalities delivers that 
		\beq
			\label{eq:5.15}
			\begin{aligned}
				\PP \biggl[ T_1 \ge |b'(1)| \biggl\{\biggl ( \frac{\sqrt{R} 2t}{n}\biggr ) ^ {1 / 2}+ \frac{c_1\tau_1 t}{n}\biggr\}  + \biggl ( \frac{2M^2 R t}{n}\biggr ) ^ {1 / 2} + \frac{c_1M\tau^2_1 t}{n} + \biggl ( \frac{M^2\sqrt{R}t}{n}\biggr ) ^ {1 / 2} & + \frac{M\tau_1 t}{n}  \biggr] \\
				&\le 6\exp(-t). 
			\end{aligned}
		\eeq
		Now we bound $T_2$. 
		\beq
			\label{eq:5.16}
			\begin{aligned}
				T_2 & = \EE [(z_i - \widetilde z_i)\widetilde x_{ij}] + \EE \epsilon_i \widetilde x_{ij} + \EE [(b'(\bx_i^\top\bbeta^*) - b'(\widetilde\bx_i^\top\bbeta^*))\widetilde x_{ij}]\\
			& \le \EE [|z_i \widetilde x_{ij}| 1_{\{ |z_i| \ge \tau_2\}}] + \EE (\epsilon_i x_{ij} ) + \EE \epsilon_i(x_{ij} - \widetilde x_{ij}) + M \sum\limits_{k=1}^d |\beta^*_k| \EE |\widetilde x_{ik}(\widetilde x_{ij}- x_{ij})|  \\
			& \le (M_1R)^{1/4} \frac{\sqrt{M_1}}{\tau_2^2} + \frac{M_3}{\sqrt{n}} + \frac{(M_1 R)^{1/4}}{\tau_1^2} + M M_2 \frac{\sqrt{R}}{\tau_1^2}. 
			\end{aligned}
		\eeq
		Finally we bound $T_3$. Note that $|\widetilde z_i \widetilde x_{ij}| \le \tau_1\tau_2$, $\var(\widetilde x_{ij} \widetilde z_i) \le \E |\widetilde z_i\widetilde x_{ij}|^2 \le \sqrt{M_1R}$. According to the Bernstein's inequality, 
		\beq
			\label{eq:5.17}
			\PP\biggl\{  |T_3| \ge \biggl ( \frac{2t\sqrt{M_1 R}}{n}\biggr ) ^ {1 / 2} + \frac{c_1\tau_1\tau_2 t}{n} \biggr \} \le 2\exp(-t). 
		\eeq
		Choose $\tau_1, \tau_2 \asymp (n /\log d)^{1/4}$.  Combining \eqref{eq:5.15}, \eqref{eq:5.16} and \eqref{eq:5.17} delivers that for some constant $C_1>0$ that depends on $M, R, \{M_i\}_{i=1}^3, b'(1)$ and any $\xi>1$, 
		\[
			\PP\biggl \{  |[\nabla_{\bbeta} \widetilde \ell(\bbeta^*)]_j| \ge C_1\xi \biggl ( \frac{\log d}{n}\biggr ) ^ {1 / 2}\biggr \} \le 2d^{-\xi}. 
		\]
		Then by the union bound for all $j \in [d]$, it holds that
		\[
			\PP\biggl \{\max_{j\in [d]} [|\nabla_{\bbeta} \widetilde \ell(\bbeta^*)]_j| \ge C_1\xi\biggl (\frac{\log d}{n}\biggr ) ^ {1 / 2}\biggr \} \le 2 d^{1-\xi}. 
		\]
		\end{proof}
		
		\begin{proof}[Proof of Lemma \ref{lem:3}]
		
		The proof strategy is quite similar to that for Lemma \ref{lem:1}, except that we need to take advantage of the restricted cone $\cC(\cS)$ that $\bDelta$ lies in. First of all, for any $1 \le j, k \le d$, 
		\[
			|\EE (\widetilde x_{ij}\widetilde x_{ik} - x_{ij}x_{ik})| \le \sqrt{\EE (x_{ij}x_{ik})^2  (\PP(|x_{ij}| \ge \tau_1) + \PP(|x_{ik}| \ge \tau_1))} \le \frac{\sqrt{2} R}{\tau^2_1}. 
		\] 
		We thus have
		\beq
			\label{eq:5.18}
			\|\EE [\bx_i\bx_i^\top - \widetilde\bx_i\widetilde\bx_i^\top]\|_{\max} \le \frac{\sqrt{2}R}{\tau_1^2} \le CR\biggl ( \frac{2\log d}{n}\biggr ) ^ {1 / 2},
		\eeq
		where $C > 0$ is some constant. Again, define a contraction function 
		\[
			\phi(x; \theta) = x^2  \ind_{\{ |x| \le \theta\}} + (x - 2\theta)^2  \ind_{\{ \theta < x \le 2\theta\}} + (x + 2\theta)^2  \ind_{\{ -2\theta \le x < -\theta\}}. 
		\]
		 Given any $\bDelta \in \cB_2(\bzero, r) \cap \cC(\cS)$, by the Taylor expansion, we can find $v \in (0, 1)$ such that
		\beq
			\label{eq:5.19}
			\begin{aligned}
				& \delta\widetilde \ell_n(\bbeta^* + \bDelta; \bbeta^*)  = \widetilde \ell_n(\bbeta^* + \bDelta) - \widetilde \ell_n(\bbeta^*) - \nabla\widetilde\ell_n(\bbeta^*)^\top\bDelta = \frac{1}{2} \bDelta^\top \widetilde\bH_n(\bbeta^* + v\bDelta)\bDelta \\
				& = \frac{1}{2n}\sum\limits_{i=1}^n b''(\widetilde \bx_i^\top (\bbeta^* + v\bDelta)) (\bDelta^\top \widetilde\bx_i)^2 \ge \frac{1}{2n}\sum\limits_{i=1}^n b''(\widetilde \bx_i^\top (\bbeta^* + v\bDelta)) \phi(\bDelta^\top \widetilde\bx_i; \alpha_1 r) \ind_{\{ |{\bbeta^*}^\top \widetilde\bx_i| \le \alpha_2 \} } \\
				& \ge \frac{m(\omega)}{2n}\sum\limits_{i=1}^n \phi(\bDelta^\top \widetilde\bx_i; \alpha_1 r) \ind_{\{ |{\bbeta^*}^\top \widetilde\bx_i| \le \alpha_2 \} }, 
			\end{aligned}
		\eeq
		where we choose $\omega = \alpha_1 + \alpha_2 > \alpha_1 r + \alpha_2$ so that the last inequality holds by Condition (1). For ease of notation, let $\cA_i := \{|\bDelta^\top\widetilde \bx_i| \le \alpha_1 r\}$ and $\cB_i := \{|{\bbeta^*}^\top\widetilde \bx_i| \le \alpha_2\}$. We have
		\[
			\begin{aligned}
				\EE & [\phi(\bDelta^\top \widetilde\bx_i; \alpha_1 r) \ind_{\cB_i}] \ge \EE [(\bDelta^\top \widetilde \bx_i)^2  \ind_{\cA_i \cap \cB_i}] \\
			 & \ge \bDelta^\top \EE [\bx_i\bx_i^\top \ind_{\cA_i \cap \cB_i}] \bDelta  - \bDelta^\top\EE [(\bx_i\bx_i^\top - \widetilde\bx_i\widetilde\bx_i^\top)  \ind_{\cA_i \cap \cB_i}] \bDelta \\
			& \ge \bDelta^\top \EE [\bx_i\bx_i^\top \ind_{\cA_i \cap \cB_i}] \bDelta  - \bDelta^\top\EE [\bx_i\bx_i^\top - \widetilde\bx_i\widetilde\bx_i^\top] \bDelta \\
			& \ge \bDelta^\top \EE (\bx_i\bx_i^\top)\bDelta - \bDelta^\top\EE (\bx_i\bx_i^\top \ind_{\cA^c_i \cup \cB^c_i})\bDelta - \bDelta^\top\EE [\bx_i\bx_i^\top - \widetilde\bx_i\widetilde\bx_i^\top] \bDelta \\
			& \ge \kappa_0 \ltwonorm{\bDelta}^2 - \sqrt{\EE (\bDelta^\top \bx_i) ^4  (\PP(\cA_i^c) + \PP(\cB_i^c))} - \bDelta^\top\EE [\bx_i\bx_i^\top - \widetilde\bx_i\widetilde\bx_i^\top] \bDelta \\
			& \ge \kappa_0 \ltwonorm{\bDelta}^2 - \sqrt{R(\PP(\cA_i^c) + \PP(\cB_i^c))}\ltwonorm{\bDelta}^2 -  \|\EE [\bx_i\bx_i^\top - \widetilde\bx_i\widetilde\bx_i^\top]\|_{\max} \lonenorm{\bDelta}^2
			\end{aligned}
		\]
		By the Markov Inequality and \eqref{eq:5.18}, 
		\[
			\begin{aligned}
			\PP(\cA^c_i) & \le \frac{\EE (\bDelta^\top \widetilde \bx_i)^2}{\alpha^2_1 r^2} \le \frac{\EE (\bDelta^\top\bx_i)^2 + \bDelta^\top \EE(\widetilde \bx_i\widetilde \bx_i^\top - \bx_i\bx_i^\top)\bDelta }{\alpha^2_1r^2}  \\
			& \le \frac{\sqrt{R}\ltwonorm{\bDelta}^2 + CRs\ltwonorm{\bDelta}^2 \sqrt{{2\log d}/{n}} }{\alpha_1^2r^2} \le \frac{\sqrt{R} + CRs\sqrt{\log d / n}}{\alpha_1^2}
			\end{aligned}
		\]
		and
		\[
			\begin{aligned}
				\PP(\cB^c_i) & \le \frac{\EE ({\bbeta^*}^\top \widetilde \bx_i)^2}{\alpha_2^2} \le \frac{\EE ({\bbeta^*}^\top\bx_i)^2 + {\bbeta^*}^\top \EE(\widetilde \bx_i\widetilde \bx_i^\top - \bx_i\bx_i^\top)\bbeta^* }{\alpha^2_2}  \\
			& \le \frac{\sqrt{R}\ltwonorm{\bbeta^*}^2 + CRs\ltwonorm{\bbeta^*}^2 \sqrt{{2\log d}/{n}} }{\alpha_2^2} \le \frac{\sqrt{R}L^2 + CRL^2s\sqrt{2 \log d / n}}{\alpha_2^2}. 
			\end{aligned}
		\]
		Overall, as long as $\alpha_1, \alpha_2$ are sufficiently large and $s\sqrt{\log d/n}$ is not large, 
		\beq
			\label{eq:5.20}
			\EE [\phi(\bDelta^\top \widetilde\bx_i; \alpha_1 r)  \ind_{\cB_i}] \ge \frac{\kappa_0}{2} \ltwonorm{\bDelta}^2. 
		\eeq
		For notational convenience, define $Z_i := \phi(\bDelta^\top \widetilde\bx_i; \alpha_1 r)  \ind_{ \cB_i } = \phi(\bDelta^\top \widetilde \bx_i   \ind_{ \cB_i }; \alpha_1 r)$ and $\Gamma_r := \sup_{\ltwonorm{\bDelta} \le r, \bDelta \in \cC(\cS)} \bigl | n^{-1} \sum\limits_{i=1}^n Z_i - \EE Z_i \bigr|$. Then an application of Massart's inequality (\cite{Mas00}) delivers that
		\beq
			\label{eq:5.21}
			\PP\biggl \{ |\Gamma_r - \EE \Gamma_r | \ge \alpha^2_1r^2\biggl ( \frac{t}{n} \biggr ) ^ {1 / 2}\biggr \} \le 2\exp\bigl( - \frac{t}{8}\bigr). 
		\eeq
		The remaining job is to derive the order of $\EE \Gamma_r$. By the symmetrization argument and Ledoux-Talagrand contraction inequality,  for a sequence of i.i.d. Rademacher variables $\{\gamma_i\}_{i=1}^n$, 
		\[
			\begin{aligned}
			\EE \Gamma_r & \le 2\EE \sup_{\ltwonorm{\bDelta} \le r, \bDelta \in \cC(\cS) } \bigl | \frac{1}{n} \sum\limits_{i=1}^n \gamma_iZ_i \bigr | \le 8\alpha_1 r  \EE \sup_{\ltwonorm{\bDelta} \le r, \bDelta \in \cC(\cS)} \bigl| \inn{\frac{1}{n}\sum\limits_{i=1}^n \gamma_i \widetilde \bx_i  \ind_{\{|{\bbeta^*}^\top \widetilde\bx_i| \le \alpha_2\}}, \bDelta}  \bigr| \\
			& \le 8\alpha_1\sqrt{s} r^2  \EE \supnorm{ \frac{1}{n}\sum\limits_{i=1}^n \gamma_i \widetilde \bx_i   \ind_{\{|{\bbeta^*}^\top \widetilde\bx_i| \le \alpha_2\}}}. 
			\end{aligned}
		\]
		For any $1 \le j \le d$, by Bernstein inequality, 
		\[
			\PP\biggl \{  |\frac{1}{n} \sum\limits_{i=1}^n \gamma_i\widetilde x_{ij}  \ind_{\{|{\bbeta^*}^\top \widetilde\bx_i| \le \alpha_2\}} | \ge \biggl ( \frac{2\sqrt{R}t}{n}\biggr ) ^ {1 / 2} + \frac{C_1\tau_1 t}{n}\biggr \} \le 2\exp(-t),
		\]
		where $C_1$ is some constant. By the union bound, we can deduce that for some constant $C_2$, 
		\[
			\PP\biggl\{  \supnorm{\frac{1}{n} \sum\limits_{i=1}^n \gamma_i \widetilde \bx_i  \ind_{\{|{\bbeta^*}^\top \widetilde\bx_i| \le \alpha_2\}}} \ge C_2\biggl ( \frac{t\log d}{n} \biggr ) ^ {1 / 2}\biggr \} \le 2d^{1-t},
		\]
		which further implies that 
		\[
			\EE \Gamma_r \le 8 \alpha_1 \sqrt{s} r^2  \EE \supnorm{\frac{1}{n} \sum\limits_{i=1}^n \gamma_i \widetilde \bx_i  \ind_{\{|{\bbeta^*}^\top \widetilde\bx_i| \le \alpha_2\}}} \le 8C_3\alpha_1r^2 \biggl ( \frac{s\log d}{n}\biggr ) ^ {1 / 2}. 
		\]
		for some constant $C_3$. Combining the above inequality with \eqref{eq:5.19}, \eqref{eq:5.20} and \eqref{eq:5.21} yields that for any $t>0$, with probability at least $1 - 2\exp(-t)$, 
		\[
			\delta \widetilde\ell_n(\bbeta; \bbeta^*) \ge \frac{m\kappa_0}{4}\ltwonorm{\bDelta}^2 - \alpha_1^2  r^2 \biggl ( \frac{8t}{n}\biggr ) ^ {1 / 2} - 8C_3\alpha_1r^2\biggl ( \frac{s\log d}{n}\biggr ) ^ {1 / 2}. 
		\]

		\end{proof}
		
		\begin{proof}[Proof of Theorem \ref{thm:2}]
		
		According to Lemma 1 in \cite{NRW12}, as long as $\lambda \ge 2\supnorm{\nabla \widetilde \ell_n(\bbeta)}$,  $\widetilde \bDelta \in \cC(\cS)$. We construct an intermediate estimator $\widetilde\bbeta_{\eta} $ between $\widetilde\bbeta$ and $\bbeta^*$:
		\[
			\widetilde \bbeta_{\eta} = \bbeta^* + \eta (\widetilde \bbeta - \bbeta^*), 
		\]
		where $\eta = 1$ if $\ltwonorm{\widetilde \bbeta - \bbeta^*} \le r$ and $\eta  = r/\ltwonorm{\widetilde\bbeta - \bbeta^*}$ if $\ltwonorm{\widetilde\bbeta - \bbeta^*} > r$. Choose $\lambda = 2 C\xi\sqrt{\log d / n}$, where $C$ and $\xi$ are the same as in Lemma \ref{lem:2}. By Lemmas \ref{lem:2} and \ref{lem:3}, it holds with probability at least $1 - 2\exp(-t)$ that 
	\beq
		\begin{aligned}
			\kappa \ltwonorm{\widetilde\bDelta_{\eta}}^2 - C_0 r^2\biggl \{ \biggl ( \frac{t}{n}\biggr ) ^ {1 / 2} + \biggl ( \frac{s \log d}{n} \biggr ) ^ {1 / 2}\biggr \} & \le \delta\widetilde\ell_n(\widetilde\bbeta_{\eta}; \bbeta^*) \\ & = \widetilde\ell_n(\widetilde\bbeta_{\eta}) - \widetilde\ell_n( \bbeta^*) - \nabla\widetilde\ell_n(\bbeta^*)^\top\widetilde\bDelta_{\eta} \\ & \le \lambda \lonenorm{\widetilde \bDelta_{\eta}} + \supnorm{\nabla \widetilde \ell_n(\bbeta^*)}  \lonenorm{\widetilde \bDelta_{\eta}}\\ & \le (\lambda  + \supnorm{\nabla \widetilde \ell_n(\bbeta^*)})  \lonenorm{\widetilde \bDelta_{\eta}} 
			\\ & \le 4 (\lambda  + \supnorm{\nabla \widetilde \ell_n(\bbeta^*)}) \lonenorm{[\widetilde \bDelta_{\eta}]_{\cS}} 
			\\ &\le 4\sqrt{s} (\lambda  + \supnorm{\nabla \widetilde \ell_n(\bbeta^*)})  \ltwonorm{\widetilde \bDelta_{\eta}}. 
		\end{aligned}
	\eeq
	Some algebra delivers that
	\beq
		\label{eq:5.24}
		\begin{aligned}
			\ltwonorm{\widetilde \bDelta_{\eta}} & \le  \frac{4\sqrt{s} (\lambda  + \supnorm{\nabla \widetilde \ell_n(\bbeta^*)})}{\kappa} + r\biggl [\frac{C_0}{\kappa} \biggl\{  \biggl ( \frac{t}{n}\biggr ) ^ {1 / 2} + \biggl ( \frac{s \log d}{n}\biggr ) ^ {1 / 2} \biggr \}\biggr ] ^ {1 / 2} \\ & =
			\frac{4\sqrt{s}  \supnorm{\nabla \widetilde \ell_n(\bbeta^*)}}{\kappa} +\frac{8C\xi}{\kappa}\biggl ( \frac{s \log d}{n}\biggr ) ^ {1 / 2} + r\biggl [\frac{C_0}{\kappa} \biggl\{  \biggl ( \frac{t}{n}\biggr ) ^ {1 / 2} + \biggl ( \frac{s \log d}{n}\biggr ) ^ {1 / 2} \biggr \}\biggr ] ^ {1 / 2}. 
		\end{aligned} 
	\eeq
	Choose $t = \xi \log d$ above. Let $r$ be greater than the RHS of the inequality above. For sufficiently  sufficiently small $s\log d/n$, we have $r \ge 5\sqrt{s} \supnorm{\nabla \widetilde \ell_n(\bbeta^*)} / \kappa$. Define $r_0 := 5\sqrt{s} \supnorm{\nabla\widetilde \ell_n(\bbeta^*)} / \kappa$ and choose $r = r_0$. Therefore, $\ltwonorm{\widetilde \bDelta_{\eta}} \le r$ and $\widetilde \bDelta_{\eta} = \widetilde \bDelta$. By Lemma \ref{lem:2}, we reach the conclusion. 
	\end{proof}

\end{document}